\begin{document}

\title{Continuous Time Limit of the DTQW in 2D+1 and Plasticity 
}


\author{Michael Manighalam         \and
        Giuseppe Di Molfetta 
}


\institute{Michael Manighalam \at
              Boston University \\
              Boston, MA\\
              \email{mbmanigh@bu.edu}           
           \and
           Giuseppe Di Molfetta \at
              Aix-Marseille Universit{\'e},\\
              Universit\'e de Toulon,\\
              CNRS, LIS, Marseille, France\\
              \email{giuseppe.dimolfetta@lis-lab.fr} 
}

\date{Received: date / Accepted: date}

\maketitle
%
%
%
%
%
%
\begin{abstract} 
A Plastic Quantum Walk admits both continuous time and continuous spacetime. The model has been recently proposed by one of the authors in~\cite{molfetta2019quantum}, leading to a general quantum simulation scheme for simulating fermions in the relativistic and non relativistic regimes. 
The extension to two physical dimensions is still missing and here, as a novel result, we demonstrate necessary and sufficient conditions concerning which discrete time quantum walks can admit plasticity, showing the resulting Hamiltonians. 
We consider coin operators as general $4$ parameter unitary matrices, with parameters which are functions of the lattice step size $\varepsilon$. This dependence on $\varepsilon$ encapsulates all functions of $\varepsilon$ for which a Taylor series expansion in $\varepsilon$ is well defined, making our results very general.  

\keywords{Plastic Quantum Walk, Discrete Time Quantum Walk \and Continuous Time Quantum Walk \and Lattice Fermions \and Quantum Simulation}
\end{abstract}

\newpage
%
%

\section{Introduction}
Confronted with the insufficiency and intractability of classical computers' abilities to simulating quantum systems, the idea of simulating quantum systems with quantum computers was born. Such inefficiencies with classic computers were notably pointed out by Feynman~\cite{FeynmanQC}, which sparked quantum simulation schemes to be the subject of much attention over the last few decades~\cite{georgescu2014quantum}. Some of the methods being used for simulating quantum systems implemented over discrete space continuous time lattices consist of constructing a Hamiltonian which imitates a physical system, or trotterizing a constructed Hamiltonian to obtain unitaries~\cite{jordan2012quantum,StrauchCTQW}. Problems with these approaches are discussed in Ref.~\cite{molfetta2019quantum} and include the breaking of Lorentz covariance as well as issues arising when recovering a bounded speed of light. Discrete spacetime models are also used to simulate quantum systems which do not share the difficulties of their discrete space continuous time counterparts, such as the quantum circuit model and the discrete time quantum walk (DTQW), the latter being the focus of this work. Concerning the simulation of quantum systems by DTQWs, it has been discussed in Ref.~\cite{Arnault_2017} that the continuous spacetime limit of various DTQWs defined on the regular lattice in arbitrary dimensions is equivalent to coupled Dirac Fermion dynamics with abelian \cite{di2012discrete,MolfettaDebbasch2014Curved,Arnault_2016} and non-abelian gauge field \cite{di2016quantum,arnault2016quantum,ARNAULT2016179} on curved spacetime \cite{di2013quantum,ArrighiGRDirac3D,succiQWBoltzmann,arrighi2019curved}. Concerning the DTQWs ability to simulate discrete space-continuous time quantum systems, it has been shown that the continuous time limit of the DTQW coincides to the continuous time quantum walk (CTQW), which is equivalent to the finite-difference Schrodinger's equation \cite{StrauchCTQW}. Also, recently a quantum simulation scheme known as a Plastic Quantum Walk has been developed which supports both a continuous spacetime limit and a continuous time-discrete space limit, and the procedure for obtaining such a walk yields a curved spacetime Hamiltonian for lattice-fermions with synchronous coordinates~\cite{molfetta2019quantum}.

While this has opened the route for elaborating universal QW based simulators of interacting particles in relativistic ($\Delta_x \ll 1$) and non-relativistic regime ($\Delta_x = 1$), a generalisation to higher dimensional spacetime is missing from that work. In this analysis we aim to do just that: introduce a novel and very general method of computing a Plastic DTQW in 2D+1, where we maintain the "spirit" of quantum walks as much as possible (i.e. we constrain coin parameters to not depend on time or lattice position). This generality is obtained by maintaining as many tunable parameters in our coin operators as possible through the continuum limit and minimally constraining coin parameters to be any functions of the lattice step size  $\varepsilon$ for which a Taylor series exists. As in ~\cite{molfetta2019quantum} the necessity of an even stroboscopic step size for 2D+1 continuous time limits of DTQWs is also recovered, which is an original result. Lastly, the continuous equations we obtain are original and very general as well and recover the lattice fermion Hamiltonian in continuous time and the Dirac equation in continuous spacetime in 2D with an opportune particular choice of the parameters. This will lead to an alternative operational formal model useful to the development of quantum simulators of gauge invariant models on the grid, in particular the Kogut-Susskind Hamiltonian \cite{kogut1975hamiltonian} completely alternative to the standard formulation of lattice gauge theories \cite{zohar2015formulation}.\\

\textit{Roadmap} Section \ref{sec:model} presents the QW. Section \ref{sec:limits} shows the different scalings and continuous time limits it supports. Then in Section \ref{sec:plasticQW} we consider the subset of sufficient and necessary conditions which allows the QW, called Plastic, to admit both a continuous time and a continuous spacetime limit. Finally Section \ref{sec:conclusion} summarizes the results, and concludes. We added three appendices for detailed proofs.

\section{Model \label{sec:model}}

We consider a QW over the 2D+1--spacetime grid. Its coin or spin degree of freedom lies $\mathcal{H}_2$, for which we may chose some orthonormal basis $\{\ket{v^L}, \ket{v^R}\}$. The overall state of the walker lies in the composite Hilbert space $\mathcal{H}_2\otimes \mathcal{H}_\mathbb{Z}^2$ and may be thus be written $\Psi=\sum_{l,m} \psi^L(l,m) \ket{v_L}\otimes\ket{l,m} + \psi^R(l,m) \ket{v_R}\otimes\ket{l,m}$, where the scalar field $\psi^L$ (resp. $\psi^R$) gives the amplitude of the particle being there and about to move left (resp. right) at every position $(l,m)\in \mathbb{Z}^2$. We use $(n,l,m) \in \mathbb{N} \times \mathbb{Z}^2$ to label instants and points in space, respectively, and let:
\begin{equation}\label{eq:time_evolution}
\Psi_{n+1}=W \Psi_n
\end{equation}
where 
\begin{equation}
W = V_x V_y
\end{equation}
and
\begin{equation}
V_i = S_i(C_i\otimes \text{Id}_\mathbb{Z}) 
\end{equation}
with
$S_i$ a state-dependent shift operator such that 
\begin{equation}
(S_x\Psi)_{n,l,m}  =\begin{pmatrix}\psi^L_{n,l+1,m}\\\psi^R_{n,l-1,m}\end{pmatrix}
\end{equation} 
and
\begin{equation}
(S_y\Psi)_{n,l,m}  =\begin{pmatrix}\psi^L_{n,l,m+1}\\\psi^R_{n,l,m-1}\end{pmatrix}
\end{equation} 
and $C_x$ and $C_y$ are elements of $U(2)$ and depend on the four real parameters $\delta_j$, $\zeta_j$, $\theta_j$, and $\phi_j$ in the following way (where $j=x$ or $y$):
\begin{equation}\label{eq:rotations}
    \begin{split}
        C_j&=e^{i\delta_j}R_z(\zeta_j)R_y(\theta_j)R_z(\phi_j)=e^{i\delta_j}e^{ -i\zeta_j\sigma_z/2}e^{ -i\theta_j\sigma_y/2} e^{ -i\phi_j\sigma_z/2}\\
        &=e^{i\delta_j}\begin{pmatrix}\cos\frac{\theta_j}{2}\exp-i\frac{\phi_j+\zeta_j}{2}&-\sin\frac{\theta_j}{2}\exp i\frac{\phi_j-\zeta_j}{2}\\\sin\frac{\theta_j}{2}\exp i\frac{-\phi_j+\zeta_j}{2}&\cos\frac{\theta_j}{2}\exp i\frac{\phi_j+\zeta_j}{2}\end{pmatrix}
    \end{split}
\end{equation}
To investigate the continuous limits, we first introduce a time discretization step $\Delta_t$ and a space discretization step $\Delta$ for both the $x$ and $y$ dimension. We then introduce, for any discrete function $\Psi$ appearing in Eq.~\eqref{eq:time_evolution}, a field $\tilde \Psi$ over the spacetime positions $\mathbb{R}^+ \times \mathbb{R}$, such that $\Psi_{n,l,m}=\tilde \Psi(t_n,x_l,y_m)$, with $t_n=n \Delta_t$, $x_l = l \Delta$, and $y_m = m \Delta$. 
Eq. \eqref{eq:time_evolution} then reads: 
\begin{equation}
\widetilde{\Psi}(t_n+\Delta_t) =  W \widetilde{\Psi}(t_n).
\label{eq:time_evolution2}
\end{equation}

Let us drop the tildes to lighten the notation. We suppose that all functions are $C^2$. In general the spacetime continuum limit, when it exists, is the coupled differential equations obtained from Eq. \eqref{eq:time_evolution2} by letting both $\Delta_t$ and $\Delta$ go to zero, as for example in \cite{di2012discrete,Arnault_2017}. When we are interested in choosing to let one of them go to zero, for instance $\Delta_t$, the result is a lattice Hamiltonian equation. If the above walk admits both limit, we will call it Plastic. \\

In the following section we will investigate first the necessary and sufficient conditions for the continuous time limit, as, usually is a sub-set of those to recover the continuous spacetime limit.

\subsection{Continuous time limit} 

In the following, we will find for which parameters $\delta_j$, $\zeta_j$, $\theta_j$, $\phi_j$, and $\tau$ (the stroboscopic step size) the continuous time limit of Eq. \eqref{eq:time_evolution2} exists and converges to:
\begin{equation}\label{eq:real_space_ham}
   H \Psi(t)=i\partial_t\Psi(t)=i\lim\limits_{\Delta_t\to 0} \frac{W^\tau-\mathbb{I}}{\tau\Delta_t}\Psi(t).
\end{equation}
In particular, $\Delta$ remains finite and without loss of generality we can normalise it to unity. To prove our main result we represent our walk in Fourier space and we define our discrete Fourier transform convention here. Let $\hat{\psi^a}(t,k_x,k_y)$, with $a=\{L,R\}$, be the Fourier transform of $\psi^a(t,x_l,y_m)$.  We use the following conventions for the forward and inverse Fourier transforms, with Fourier variables $(k_x,k_y)\in[-\pi,\pi]^2$: 
\begin{align}
    \hat{\psi^a}(t,k_x,k_y)&=\sum_{l=-\infty}^{\infty}\sum_{m=-\infty}^{\infty}e^{-ik_x l}e^{-ik_y m}\psi^a(t,x_l,y_m)\equiv \mathcal F(\psi^a)\\
    \psi^a(t,x_l,y_m)&=\frac{1}{(2\pi)^2}\int^{\pi}_{-\pi}dk_x\int^{\pi}_{-\pi}dk_ye^{ik_x l}e^{ik_y m}\hat{\psi^a}(t,k_x,k_y)\equiv\mathcal{F}^{-1}(\widehat \psi^a).
\end{align}
A standard procedure is to represent operators in Fourier space as follows: given an operator $O$ on a function space $Y$, its Fourier conjugate operator $\hat O$ is defined by $\hat O \hat f(k)= \mathcal{F}(O(f(x)))$, with $f(x)\in Y$, so that $\hat O$ is the Fourier representation of $O$. In particular, the shift operators $S_x$ and $S_y$ in Fourier space translate:

\begin{equation}\label{eq:defn}
    \begin{split}
        &\mathcal{F}(S_x\Psi(t))=\widehat{S_x}\hat{\Psi}(t)=e^{ik_x\sigma_z}\hat{\Psi}(t)=R_z(-2k_x)\hat{\Psi}(t)\\
        &\mathcal{F}(S_y\Psi(t))=\widehat{S_y}\hat{\Psi}(t)=e^{ik_y\sigma_z}\hat{\Psi}(t)=R_z(-2k_y)\hat{\Psi}(t).
    \end{split}
\end{equation}

The time evolution Eq. \eqref{eq:time_evolution2} in Fourier space then becomes the following:
\begin{equation}
    \begin{split}
        \hat{\Psi}(t+\Delta_t)&=\widehat{W}\hat{\Psi}(t) =e^{ik_x \sigma_z}C_xe^{ik_y \sigma_z}C_y\hat{\Psi}(t)
    \end{split}
\end{equation}
and Eq. \eqref{eq:real_space_ham} reduces to :
\begin{equation}\label{eq:ham}
    \widehat{H}\hat{\Psi}(t)=i\partial_t\hat{\Psi}(t)=i\lim\limits_{\Delta_t \to 0}\frac{(e^{ik_x \sigma_z}C_xe^{ik_y \sigma_z}C_y)^\tau-\mathbb{I}}{\tau\Delta_t}\hat{\Psi}(t).
\end{equation}

%
%
%
%

\section{Continuum limit and scalings}\label{sec:limits}

In order to find the continuum limit in equation \eqref{eq:ham}, let us first parametrize the four real parameters defining the quantum coin, as follows:
\begin{equation}\label{eq:scaling}
    \begin{split}
        \zeta_j&=\zeta_{0j}+\zeta_{1j}\Delta_t\\
        \theta_j&=\theta_{0j}+\theta_{1j}\Delta_t\\
        \phi_j&=\phi_{0j}+\phi_{1j}\Delta_t.
    \end{split}
\end{equation}
Altogether, these jets define a family of QWs indexed by $\Delta_t$, whose embedding in spacetime, and defining angles, depend on $\Delta_t$. The continuum limit of Eq.\eqref{eq:ham} can then be investigated by Taylor expanding $\Psi(t)$ around $(t_n,x_l,y_m)$.

Using Eq.\eqref{eq:scaling}, and expanding around $\Delta_t = 0$, the rotation matrices $R_m(w)$ read:
\begin{equation}
    R_m(w)\simeq R_m(w_0)(1-\frac{iw_1\Delta_t}{2}\sigma_m+O(\Delta_t^2))
\end{equation}
 where $w=\zeta$,$\theta$,$\phi$ and $m=x,y$. We also recover the first order of the split-step unitary operator, leaving the proof to Appendix~\ref{app:Wexpansion}:

\begin{equation}\label{eq:w}
    \begin{split}
    \widehat{W}\simeq
    e^{i\delta}(A-\frac{i \Delta_t}{2} B+O(\Delta_t^2))
    \end{split}
\end{equation}
where $A=A_xA_y$, $B=A_xB_y+B_xA_y$, $\delta = \delta_x+\delta_y$ and 
\begin{equation}
\begin{split}
 A_j&=R_z(\zeta'_{0j})R_y(\theta_{0j})R_z(\phi_{0j})\\
 B_j&=\zeta_{1j}\sigma_z A_j+\theta_{1j} \sigma_y R_z(-2\zeta'_{0j})A_j+\phi_{1j} A_j \sigma_z\\
\zeta'_{0j}&=\zeta_{0j}-2k_j.
\end{split}
\end{equation}
Finally, in order to compute the leading orders of Eq. \eqref{eq:ham} we need to compute the $\tau^{th}-$power of the above operator. The $\tau^{th}-$power of $W$:
\begin{equation}\label{eq:reduced1sc^n}
    \begin{split}
    \widehat{W}^\tau&\simeq e^{i\delta \tau}(A-\frac{i \Delta_t}{2} B)^\tau =(e^{i\delta}A)^\tau(\mathbb{I}-\frac{i \Delta_t}{2}A^{-1}\sum_{j=0}^{\tau-1}A^{-j}BA^j+O(\Delta_t^2)).
    \end{split}
\end{equation}
For detailed proof of Eq.~\eqref{eq:reduced1sc^n}, see Appendix~\ref{app:Wexpansion}.\\

Now we have the following two lemmas:
\begin{lemma}\label{lma:Odt^2}
The continuous time limit as defined in Eq. \eqref{eq:ham} will be independent of any $O(\Delta_t^2)$ terms in the parameters $\zeta$, $\theta$, and $\phi$.
\end{lemma}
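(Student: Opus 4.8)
The plan is to track an arbitrary $O(\Delta_t^2)$ modification of the coin angles through the two expansions already in hand — Eq.~\eqref{eq:w} for $\widehat{W}$ and Eq.~\eqref{eq:reduced1sc^n} for $\widehat{W}^\tau$ — and verify that it never reaches the order that the limit in Eq.~\eqref{eq:ham} actually probes. Concretely, I would replace the jets of Eq.~\eqref{eq:scaling} by $\zeta_j=\zeta_{0j}+\zeta_{1j}\Delta_t+r^{(\zeta)}_j(\Delta_t)$, and likewise for $\theta_j,\phi_j$, where $r^{(\zeta)}_j,r^{(\theta)}_j,r^{(\phi)}_j$ are arbitrary $C^2$ (Taylor-expandable) functions with $r(0)=r'(0)=0$, i.e. $r(\Delta_t)=O(\Delta_t^2)$; any collection of higher Taylor coefficients $\zeta_{2j},\zeta_{3j},\dots$ is subsumed in these remainders. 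Since each $R_m(\cdot)=e^{-i(\cdot)\sigma_m/2}$ is analytic with derivative of norm $\le\tfrac12$ \emph{uniformly} in its argument, a first-order Taylor expansion with remainder gives $R_m\bigl(w_0+w_1\Delta_t+r(\Delta_t)\bigr)=R_m(w_0+w_1\Delta_t)+O(\Delta_t^2)$. Feeding this into the proof of Eq.~\eqref{eq:w} (Appendix~\ref{app:Wexpansion}) shows the only effect of the $r$'s is to alter the unwritten $O(\Delta_t^2)$ remainder there: $A=A_xA_y$ depends solely on $\zeta_{0j},\theta_{0j},\phi_{0j}$ (through $\zeta'_{0j}$) and $B=A_xB_y+B_xA_y$ solely on those together with $\zeta_{1j},\theta_{1j},\phi_{1j}$, so both $A$ and $B$ are untouched, as is $\delta=\delta_x+\delta_y$.

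Next I would insert this into Eq.~\eqref{eq:reduced1sc^n}, $\widehat{W}^\tau\simeq(e^{i\delta}A)^\tau\bigl(\mathbb{I}-\tfrac{i\Delta_t}{2}A^{-1}\sum_{j=0}^{\tau-1}A^{-j}BA^j+O(\Delta_t^2)\bigr)$, in which $A$, $B$, $\delta$ are exactly the $r$-independent objects above and the $r$'s are again buried in the $O(\Delta_t^2)$ slot. Substituting into Eq.~\eqref{eq:ham} and dividing by $\tau\Delta_t$, the remainder contributes $O(\Delta_t^2)/(\tau\Delta_t)=O(\Delta_t)\xrightarrow{\Delta_t\to0}0$; hence whatever finite limit $\widehat{H}$ exists is assembled entirely out of $(e^{i\delta}A)^\tau$ and $A^{-1}\sum_{j}A^{-j}BA^j$, none of which depend on the $r$'s. (In particular the very existence of the limit — which forces $(e^{i\delta}A)^\tau=\mathbb{I}$ so that the $O(1)$ term of the numerator vanishes — is itself an $r$-independent condition, so the statement holds whether or not the limit exists.) This is precisely the assertion of Lemma~\ref{lma:Odt^2}.

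Equivalently, and perhaps more transparently, one can phrase it as: the limit in Eq.~\eqref{eq:ham} is (when it exists) the one-sided derivative at $\Delta_t=0$ of $\Delta_t\mapsto(\widehat{W}^\tau-\mathbb{I})/(\tau\Delta_t)\cdot\Delta_t$, i.e. of $(\widehat{W}^\tau-\mathbb{I})/\tau$; since $\widehat{W}^\tau$ is analytic in the angles and the angles are $C^1$ in $\Delta_t$, the chain rule shows this derivative depends only on $\tfrac{d}{d\Delta_t}\big|_{0}$ of each angle, i.e. on $\zeta_{1j},\theta_{1j},\phi_{1j}$ and the base values, not on any higher coefficient. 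The one place where genuine (if mild) care is needed — and the step I expect to be the main obstacle — is justifying that the $O(\Delta_t^2)$ remainders are uniform in the Fourier variables $k_x,k_y$, so that they still vanish after the $1/\Delta_t$ blow-up inside the operator-norm limit; this follows from the uniform bound on $R_m'$ and from there being only finitely many ($\tau$-fold) bounded factors in the product, but it deserves to be written out rather than taken for granted. Everything else is bookkeeping on the expansions imported from Appendix~\ref{app:Wexpansion}.
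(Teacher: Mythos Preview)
Your proposal is correct and follows essentially the same approach as the paper: track the $O(\Delta_t^2)$ parameter perturbations through the expansions \eqref{eq:w}--\eqref{eq:reduced1sc^n}, observe they only affect the $O(\Delta_t^2)$ remainder, and note that this remainder vanishes after dividing by $\tau\Delta_t$ and sending $\Delta_t\to0$. The paper's own proof is a terse two-sentence version of exactly this; your write-up is considerably more careful (explicit remainder functions, the chain-rule rephrasing, and the uniformity-in-$k$ check), but the underlying idea is identical.
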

\begin{proof}
We see that the only contribution of the $O(\Delta_t^2)$ terms in the parameters $\zeta$, $\theta$, and $\phi$ will be in the $O(\Delta_t^2)$ term. The $O(\Delta_t^2)$ term in Eq.~\eqref{eq:reduced1sc^n} does not contribute to the continuous time limit defined in Eq.~\eqref{eq:ham} because it goes to zero as the limit is taken. Thus, the $O(\Delta_t^2)$ terms in the parameters $\zeta$, $\theta$, and $\phi$ do not contribute to the continuous time limit.\qed
\end{proof}

\begin{lemma}\label{lma:n=1}
There is no continuous time limit as defined in Eq. \eqref{eq:ham} for $\tau=1$.
\end{lemma}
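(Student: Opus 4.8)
The plan is to examine the $\tau=1$ case of Eq.~\eqref{eq:ham} directly and show that the limit $\Delta_t\to 0$ fails to exist unless the would-be generator is unbounded (or the limit is simply ill-defined because the numerator does not vanish). Setting $\tau=1$ in Eq.~\eqref{eq:reduced1sc^n} gives $\widehat{W}\simeq e^{i\delta}A - \tfrac{i\Delta_t}{2}e^{i\delta}B + O(\Delta_t^2)$, so that
\begin{equation}\label{eq:n1quotient}
  \widehat H\hat\Psi(t) = i\lim_{\Delta_t\to 0}\frac{e^{i\delta}A-\mathbb{I}}{\Delta_t}\hat\Psi(t) - \frac{i}{2}e^{i\delta}B\,\hat\Psi(t) + \lim_{\Delta_t\to 0}O(\Delta_t).
\end{equation}
The last term vanishes; the middle term is a bounded operator; so the existence of the limit hinges entirely on the behaviour of the first quotient $\frac{e^{i\delta}A-\mathbb{I}}{\Delta_t}$ as $\Delta_t\to 0$.

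Next I would argue that $A = A_xA_y$ and $\delta=\delta_x+\delta_y$ are, at leading order, built out of the \emph{zeroth}-order angles $\zeta_{0j},\theta_{0j},\phi_{0j},\delta_{0j}$, which do not depend on $\Delta_t$. (More precisely, if one also allows an $O(\Delta_t)$ piece in $\delta_j$ as in the scaling ansatz, then $e^{i\delta}A = e^{i\delta_0}A^{(0)} + O(\Delta_t)$, where $A^{(0)}$ is the $\Delta_t$-independent matrix obtained from the zeroth-order angles.) Hence $e^{i\delta}A - \mathbb{I} = \big(e^{i\delta_0}A^{(0)} - \mathbb{I}\big) + O(\Delta_t)$. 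For the quotient $\frac{e^{i\delta}A-\mathbb{I}}{\Delta_t}$ to converge as $\Delta_t\to 0$, the $\Delta_t$-independent term $e^{i\delta_0}A^{(0)} - \mathbb{I}$ must itself be the zero operator, i.e. we would be forced to impose $e^{i\delta_0}A^{(0)} = \mathbb{I}$. I would then show this is impossible: $A^{(0)} = R_z(\zeta'_{0x})R_y(\theta_{0x})R_z(\phi_{0x})R_z(\zeta'_{0y})R_y(\theta_{0y})R_z(\phi_{0y})$ still carries the Fourier variables through $\zeta'_{0j}=\zeta_{0j}-2k_j$, so $e^{i\delta_0}A^{(0)}=\mathbb{I}$ would have to hold \emph{for all} $(k_x,k_y)\in[-\pi,\pi]^2$; but the $k$-dependence is genuine (e.g. the $R_z(-2k_x)$ factor is non-constant in $k_x$), so no choice of the constant angles can make this identity hold identically in $k$. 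Therefore the quotient diverges and the continuous time limit does not exist for $\tau=1$.

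The main obstacle I anticipate is making the "$A^{(0)}=\mathbb{I}$ for all $k$ is impossible" step fully rigorous and clean: one must be careful that $A^{(0)}$ is exactly $\Delta_t$-independent (so that the leading term of the numerator really is a fixed nonzero matrix and cannot be cancelled order-by-order), and one should phrase the divergence in terms of the operator acting on a suitable dense set of $\hat\Psi(t)$ so that "the limit does not exist" is a statement about the operator, not merely about matrix entries. A convenient way to close this is to pick a single Fourier mode $k$ where $e^{i\delta_0}A^{(0)}(k)\neq\mathbb{I}$ (which exists by the $k$-dependence argument) and exhibit a vector on which $\tfrac{1}{\Delta_t}(e^{i\delta_0}A^{(0)}(k)-\mathbb{I})$ blows up; everything else in \eqref{eq:n1quotient} stays bounded, so the right-hand side cannot converge. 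This mirrors the one-dimensional argument in~\cite{molfetta2019quantum} and explains structurally why an even stroboscopic step (the $\tau\ge 2$ analysis that follows) is needed.
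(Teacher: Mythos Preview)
Your proposal is correct and follows essentially the same approach as the paper: both arguments reduce the existence of the $\tau=1$ limit to the requirement $e^{i\delta}A=\mathbb{I}$ at zeroth order in $\Delta_t$, and then rule this out because $A$ carries genuine $(k_x,k_y)$-dependence through $\zeta'_{0j}=\zeta_{0j}-2k_j$ that cannot be cancelled by $k$-independent angle choices. Your write-up is somewhat more careful about separating the $\Delta_t$-independent piece and about phrasing the divergence as an operator statement (picking a Fourier mode on which the quotient blows up), but these are refinements of the same argument rather than a different route.
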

\begin{proof}
For the Hamiltonian in Eq.~\eqref{eq:ham} to be finite, $\widehat{W}^\tau$ must equal $\mathbb{I}+O(\Delta_t)$, and thus $\widehat{W}$ must equal $\mathbb{I}+O(\Delta_t)$ as well. Therefore, from Eq.~\eqref{eq:reduced1sc^n}, $(e^{i\delta}A)^\tau$ must equal identity if $\widehat{W}=\mathbb{I}+O(\Delta_t)$. The only unitary operator $e^{i\delta}A$ that could possibly satisfy $(e^{i\delta}A)^\tau=\mathbb{I}$ for $\tau=1$ is the identity operator itself. But $e^{i\delta}A$ cannot even equal identity, as $A$ has $k_x$ and $k_y$ dependence from containing $\widehat{S_x}$ and $\widehat{S_y}$, and the angles are not permitted to depend on $k_x$ and $k_y$, so there is no possible way to cancel out the $k_x$ and $k_y$ dependence. Thus, there is no continuous time limit defined in Eq.~\eqref{eq:ham} for $\tau=1$.
\qed\end{proof}
\begin{lemma}\label{lma:constraints}
For the continuous time limit in Eq.~\eqref{eq:ham} to exist, $\theta_{0i}=2q\pi+\pi$ for any integer $q$ and $i=x$ or $y$, $\theta_{0j}=2\pi r$ for any integer $r$ and $j\neq i$, and $\delta=\frac{2\pi l}{\tau}-\frac{p\pi}{2}$ for odd integer $p$ and for any positive integer number $l$.
\end{lemma}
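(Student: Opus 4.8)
The plan is to reduce the whole statement to a pointwise algebraic condition on $2\times 2$ unitaries. By the argument already used for Lemma~\ref{lma:n=1}, finiteness of the limit in Eq.~\eqref{eq:ham} forces $\widehat W^\tau=\mathbb I+O(\Delta_t)$, and comparing with Eq.~\eqref{eq:reduced1sc^n} this means $(e^{i\delta}A)^\tau=\mathbb I$; moreover, since $A=A_xA_y$ still depends on $k_x,k_y$ through $\zeta'_{0j}=\zeta_{0j}-2k_j$ and Eq.~\eqref{eq:ham} is an operator identity, this must hold for \emph{every} $(k_x,k_y)\in[-\pi,\pi]^2$. So the task becomes: find all parameter choices for which $(k_x,k_y)\mapsto (e^{i\delta}A_x(k_x)A_y(k_y))^\tau$ is identically $\mathbb I$ on the Brillouin zone.

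First I would diagonalize. Each $A_j$ is a product of rotations, so $A\in SU(2)$; writing its eigenvalues as $e^{\pm i\alpha(k_x,k_y)}$ with $\operatorname{tr}A=2\cos\alpha$, the eigenvalues of $e^{i\delta}A$ are $e^{i(\delta\pm\alpha)}$, and since $e^{i\delta}A$ is unitary hence diagonalizable, $(e^{i\delta}A)^\tau=\mathbb I$ is equivalent to $\tau(\delta+\alpha)\in 2\pi\mathbb Z$ and $\tau(\delta-\alpha)\in 2\pi\mathbb Z$. Subtracting gives $\tau\alpha\in\pi\mathbb Z$ pointwise; since $\alpha=\arccos(\tfrac12\operatorname{tr}A)\in[0,\pi]$ is continuous in $(k_x,k_y)$ and takes values in the discrete set $\tfrac\pi\tau\mathbb Z$, it must be constant, hence $\operatorname{tr}A$ is constant on $[-\pi,\pi]^2$.

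Next I would compute $\operatorname{tr}(A_xA_y)$ from the explicit matrix in Eq.~\eqref{eq:rotations} (taking $\delta_j\to0$, $\zeta_j\to\zeta'_{0j}$, $\theta_j\to\theta_{0j}$, $\phi_j\to\phi_{0j}$). Using that the diagonal/antidiagonal entries of each $A_j$ are complex conjugates up to sign, one finds
\[
\operatorname{tr}A=2\cos\tfrac{\theta_{0x}}{2}\cos\tfrac{\theta_{0y}}{2}\cos\!\big(c_1-k_x-k_y\big)-2\sin\tfrac{\theta_{0x}}{2}\sin\tfrac{\theta_{0y}}{2}\cos\!\big(c_2+k_x-k_y\big),
\]
where $c_1,c_2$ collect the (irrelevant) constants built from $\zeta_{0j},\phi_{0j}$. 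Switching to $u=k_x+k_y$, $v=k_x-k_y$ and expanding in $\{1,\cos u,\sin u,\cos v,\sin v\}$, which are linearly independent as functions on the square, constancy of $\operatorname{tr}A$ forces both $\cos\tfrac{\theta_{0x}}{2}\cos\tfrac{\theta_{0y}}{2}=0$ and $\sin\tfrac{\theta_{0x}}{2}\sin\tfrac{\theta_{0y}}{2}=0$. Since $\cos$ and $\sin$ of one angle cannot vanish simultaneously, the only solutions are: one of $\theta_{0x},\theta_{0y}$ congruent to $\pi$ and the other to $0$ modulo $2\pi$ — i.e.\ $\theta_{0i}=2q\pi+\pi$ and $\theta_{0j}=2\pi r$ with $\{i,j\}=\{x,y\}$, the claimed condition. (Note $\zeta_{0j}$ and $\phi_{0j}$ are left completely free by this lemma.)

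It remains to pin down $\delta$. With these $\theta_0$ values $A_i$ is purely off-diagonal and $A_j$ diagonal, so $A=A_xA_y$ is an off-diagonal element of $SU(2)$, whence $A^2=-\mathbb I$: the eigenvalues of $A$ are $\pm i$, i.e.\ $\alpha=\pi/2$. Then $(e^{i\delta}A)^\tau=\mathbb I$ becomes $\tau(\delta+\tfrac\pi2)\in2\pi\mathbb Z$ and $\tau(\delta-\tfrac\pi2)\in2\pi\mathbb Z$; subtracting shows $\tau$ is even (this recovers the parity constraint mentioned in the introduction), and the first relation gives $\delta=\tfrac{2\pi l}{\tau}-\tfrac\pi2$, which is the same family as $\delta=\tfrac{2\pi l}{\tau}-\tfrac{p\pi}{2}$ for odd $p$ because replacing $p$ by $p+2$ only shifts $l$ by the integer $\tau/2$. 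I expect the main obstacle to be the middle step: carefully organizing the trigonometric identity for $\operatorname{tr}(A_xA_y)$ and the change-of-variables/linear-independence argument that promotes ``constant trace'' to the two vanishing-coefficient equations, while checking that no degenerate sub-case (a coefficient already zero, a trivial rotation) escapes the classification; the promotion of the pointwise condition $\tau\alpha\in\pi\mathbb Z$ to ``$\alpha$ constant'' via continuity also needs to be stated with a little care.
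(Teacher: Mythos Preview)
Your argument is correct and follows the same high-level strategy as the paper: reduce to $(e^{i\delta}A)^\tau=\mathbb I$, pass to eigenvalues, and exploit the fact that the resulting constraint must hold for all $(k_x,k_y)$. The execution differs in two places. First, the paper sets an eigenvalue of $e^{i\delta}A$ equal to a $\tau$-th root of unity and then \emph{differentiates} the resulting constraint $W_1\cos g-W_2\cos h-c=0$ in $k_x$ and $k_y$ to force $W_1=W_2=0$; you instead argue by continuity that $\alpha$ (hence $\tfrac12\operatorname{tr}A$) is constant on the Brillouin zone, and then use linear independence of $\{\cos u,\sin u,\cos v,\sin v\}$ in the variables $u=k_x+k_y$, $v=k_x-k_y$ to kill both coefficients. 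These are equivalent ways of extracting Fourier coefficients, but your route sidesteps the paper's somewhat informal handling of which root of unity is being matched at each $(k_x,k_y)$. Second, your final step (from $\alpha=\pi/2$) simultaneously yields the parity constraint $\tau$ even, which the paper isolates as a separate lemma (Lemma~\ref{lma:n}); so your proof of Lemma~\ref{lma:constraints} actually absorbs Lemma~\ref{lma:n} as a byproduct. The point you flag about carefully promoting the pointwise condition $\tau\alpha\in\pi\mathbb Z$ to ``$\alpha$ constant'' is indeed the only place requiring attention, and it goes through because $\operatorname{tr}A$ is continuous (in fact real-analytic) in $(k_x,k_y)$ on the connected square.
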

\begin{proof}
Following up on the constraint that $(e^{i\delta}A)^\tau=\mathbb{I}$ from Eq. \eqref{eq:reduced1sc^n}, let $U$ be the diagonalization matrix of $A$, and let $D$ be the matrix of eigenvalues of $A$. Then we have the following:
\begin{align}
    &(e^{i\delta}A)^\tau=e^{i\tau\delta}(U^{-1}DUU^{-1}DUU^{-1}DU\ldots)=e^{i\tau\delta}U^{-1}D^\tau U=\mathbb{I}\\
    &\rightarrow e^{i\tau\delta}D^\tau=
    UU^{-1}=\mathbb{I}\rightarrow e^{i\tau\delta}D^\tau=\mathbb{I}
\end{align}
 If we set the eigenvalues of $e^{i\delta}A$ equal to a $\tau^{th}$ root of unity $e^{2\pi i l/\tau}$ where $l=0,1,2,..$ (which is equivalent to the constraint $(e^{i\delta}A)^\tau=\mathbb{I}$), we will recover the following constraint equation for $\theta_{0x}$ and $\theta_{0y}$. Solving for $D$ by finding the eigenvalues of $A$, we have the following:
 \begin{equation}
    \begin{split}
     D=&\frac{1}{2}e^{-i(\phi_{0x}+\phi_{0y}+\zeta'_{0x}-\zeta'_{0y})/2}\\
     &\times\Big[\big((1+e^{i(\phi_{0x}+\phi_{0y}+\phi'_{0x}+\phi'_{0y})})\cos{\frac{\theta_{0x}}{2}}\cos{\frac{\theta_{0y}}{2}}-(e^{i(\phi_{0y}+\zeta'_{0x})}+e^{i(\phi_{0x}+\zeta'_{0y})})\sin{\frac{\theta_{0x}}{2}}\sin{\frac{\theta_{0y}}{2}}\big)\mathbb{I}\\
     &+\Bigg(\Big(\big(\sin{\frac{\theta_{0x}}{2}}\sin{\frac{\theta_{0y}}{2}}(e^{i(\phi_{0y}+\zeta'_{0x})}+e^{i(\phi_{0x}+\zeta'_{0y})})-\cos{\frac{\theta_{0x}}{2}}\cos{\frac{\theta_{0y}}{2}}(1+e^{i(\phi_{0x}+\phi_{0y}+\phi'_{0x}+\phi'_{0y})})\big)^2\\
     &-4e^{i(\phi_{0x}+\phi_{0y}+\phi'_{0x}+\phi'_{0y})}\Big)\Bigg)^{1/2}\sigma_z\Big]
    \end{split}
 \end{equation}
We see that $D$ is purely diagonal and is of the form $D=\begin{pmatrix}x+y&0\\0&x-y\end{pmatrix}$ for complex numbers $x,y$. We also see a repetition of certain terms in $D$, and can greatly reduce the verbosity of the equation by writing it the following way:
\begin{equation}
     D=\frac{1}{2Y^{1/2}}\Big[\big(U_1W_1-U_2W_2\big)\mathbb{I}+\Bigg(\Big(\big(W_2U_2-W_1U_1)\big)^2-4Y\Big)\Bigg)^{1/2}\sigma_z\Big]
\end{equation}
where
\begin{equation}
    \begin{split}
        &U_1=1+e^{i(\phi_{0x}+\phi_{0y}+\phi'_{0x}+\phi'_{0y})}\\
        &U_2=e^{i(\phi_{0y}+\zeta'_{0x})}+e^{i(\phi_{0x}+\zeta'_{0y})}\\
        &W_1=\cos{\frac{\theta_{0x}}{2}}\cos{\frac{\theta_{0y}}{2}}\\
        &W_2=\sin{\frac{\theta_{0x}}{2}}\sin{\frac{\theta_{0y}}{2}}\\
        &Y=e^{i(\phi_{0x}+\phi_{0y}+\phi'_{0x}+\phi'_{0y})}
    \end{split}
\end{equation}
Taking either non-zero component of $D$, setting it equal to $e^{2\pi il/\tau-\delta}$ (where $l=0,1,2,..$), and solving for either $W_1$ or $W_2$ yields the following constraint equation:
\begin{equation}\label{eq:constraint_function_simple}
    f(k_x,k_y)=W_1\cos(g(k_x,k_y))-W_1\cos(h(k_x,k_y))-c=0\footnote{This constraint reduces to the constraint obtained for $\theta_{0}$ in Ref.~\cite{manighalam} when the 1D limit is taken i.e. $\zeta_{0y},\theta_{0y},\phi_{0y},\delta=0$ (see Eq.~(A7) of Ref.~\cite{manighalam})}
\end{equation}
where 
\begin{equation}
    \begin{split}
        c&=\cos(\frac{2\pi l}{n}-\delta),\\
        g(k_x,k_y)&=\frac{\phi_{0x}+\phi_{0y}+\zeta'_{0x}(k_x)+\zeta'_{0y}(k_y)}{2},\\
        h(k_x,k_y)&=\frac{\phi_{0y}-\phi_{0x}+\zeta'_{0x}(k_x)-\zeta'_{0y}(k_y)}{2}.
    \end{split}
\end{equation}

Notice that the constraint Eq. \eqref{eq:constraint_function_simple} has to hold for all $k_x$ and $k_y$ and additionally, all derivatives of $f(k_x,k_y)$ with respect to $k_x$ and $k_y$ must equal zero as well. Using $\frac{\partial g(k_x,k_y)}{\partial k_x}=-1$, $\frac{\partial h(k_x,k_y)}{\partial k_x}=-1$, $\frac{\partial g(k_x,k_y)}{\partial k_y}=-1$, and $\frac{\partial h(k_x,k_y)}{\partial k_y}=1$ we obtain the derivative of $f(k_x,k_y)$ with respect to $k_x$ and $k_y$:
\begin{equation}
    \begin{split}
        &\frac{\partial f(k_x,k_y)}{\partial k_x}=W_1\sin(g(k_x,k_y))-W_2\sin(h(k_x,k_y))=0\\
        &\frac{\partial f(k_x,k_y)}{\partial k_y}=W_1\sin(g(k_x,k_y))+W_2\sin(h(k_x,k_y))=0.
    \end{split}
\end{equation}
For both of these equations to be true, we must have the following:
\begin{equation}
    \begin{split}
        &W_1\sin(g(k_x,k_y))=0\\
        &W_2\sin(h(k_x,k_y))=0.
    \end{split}
\end{equation}
Due to $\phi_{0i}$ and $\zeta_{0i}$ being parameters which cannot depend on $k_{i}$, it follows that $\sin(g(k_x,k_y))$ cannot equal zero for all values of $k_x$ and $k_y$, so the following must be true:
\begin{equation}\label{eq:constraints}
    \begin{split}
        &W_1=0\rightarrow\cos(\frac{\theta_{0x}}{2})\cos(\frac{\theta_{0y}}{2})=0\rightarrow \theta_{0i}=2q\pi+\pi \text{ for any integer $q$, and $i=x$ or $y$}\\
        &W_2=0\rightarrow\sin(\frac{\theta_{0x}}{2})\sin(\frac{\theta_{0y}}{2})=0\rightarrow \theta_{0j}=2\pi r \text{ for any integer $r$, and $j\neq i$}
    \end{split}
\end{equation}
In other words, $|(\theta_{0x}-\theta_{0y})\text{ mod }2\pi|=\pi$. This corresponds to either $C_x$ purely diagonal and $C_y$ purely off-diagonal, or vice-versa. Further, because $a,~b=0$, it must be true from Eq.~\eqref{eq:constraint_function_simple} that $c=0\rightarrow \cos(\frac{2\pi l}{\tau}-\delta)=0\rightarrow \delta=\frac{2\pi l}{n}-\frac{p\pi}{2}$ for odd integer $p$ and any positive integer number $l$.
\qed\end{proof}
\begin{lemma}\label{lma:n}
For the limit defined in Eq. \eqref{eq:ham} to be finite, $\tau$ must be even.
\end{lemma}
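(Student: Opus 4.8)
The plan is to leverage the structural constraint already extracted in Lemma~\ref{lma:constraints}: the continuous time limit forces one of the two coins to be purely off-diagonal at leading order in $\Delta_t$ and the other purely diagonal. Combined with the fact that a purely off-diagonal $2\times2$ unitary has an extremely rigid power structure, this will pin down the parity of $\tau$ immediately.

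First I would recall, as in the proof of Lemma~\ref{lma:n=1}, that finiteness of the Hamiltonian in Eq.~\eqref{eq:ham} requires $\widehat{W}^\tau=\mathbb{I}+O(\Delta_t)$, which by Eq.~\eqref{eq:reduced1sc^n} is equivalent to $(e^{i\delta}A)^\tau=\mathbb{I}$. So it suffices to show that no odd $\tau$ can satisfy this once the constraints of Lemma~\ref{lma:constraints} hold. Next I would compute $A=A_xA_y$ under those constraints: since $\theta_{0i}=2q\pi+\pi$ gives $R_y(\theta_{0i})=\pm i\sigma_y$, which is purely off-diagonal, while $\theta_{0j}=2\pi r$ gives $R_y(\theta_{0j})=\pm\mathbb{I}$, and every surrounding $R_z(\cdot)$ factor is diagonal, one of $A_x,A_y$ is purely off-diagonal and the other purely diagonal. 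Their product $A$ — in either order — is therefore purely off-diagonal, so by unitarity $e^{i\delta}A=\left(\begin{smallmatrix}0&\alpha\\\beta&0\end{smallmatrix}\right)$ with $|\alpha|=|\beta|=1$.

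Finally I would observe that $(e^{i\delta}A)^2=\alpha\beta\,\mathbb{I}$, hence for $\tau=2s+1$ one gets $(e^{i\delta}A)^\tau=(\alpha\beta)^s\left(\begin{smallmatrix}0&\alpha\\\beta&0\end{smallmatrix}\right)$, whose diagonal entries vanish and which can therefore never equal $\mathbb{I}$; for $\tau=2s$ it is the scalar $(\alpha\beta)^s\mathbb{I}$, which is compatible with the constraint. Thus $\tau$ must be even. The only mildly delicate point — the ``obstacle,'' such as it is — is the bookkeeping of the $\pm$ signs produced by the integers $q,r$ and checking that the diagonal $R_z$ factors genuinely preserve the anti-diagonal block structure of $A$ rather than accidentally reintroducing a diagonal part; once that is verified the parity conclusion is immediate.
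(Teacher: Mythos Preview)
Your proposal is correct and follows essentially the same line as the paper's proof: both arguments use the constraints of Lemma~\ref{lma:constraints} to recognise that $A$ (equivalently $e^{i\delta}A$) is purely off-diagonal, so that its square is a scalar multiple of $\mathbb{I}$, and then separate into even and odd $\tau$. The paper phrases the key fact as $A^2=-\mathbb{I}$ and, for odd $\tau=2s+1$, reduces $(e^{i\delta}A)^{2s+1}$ to $e^{i\delta}A$ and invokes Lemma~\ref{lma:n=1}; you instead note directly that any odd power of an anti-diagonal unitary has vanishing diagonal and therefore cannot equal $\mathbb{I}$, which is a slightly more self-contained way to close the odd case but not a different idea.
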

\begin{proof}
Consider $\tau$ even. Substituting our $\theta$ constraints from lemma~\ref{lma:constraints} into $(e^{i\delta}A)^{\tau}$, where $\tau=2w$ for some integer $w$, we find that $(e^{i\delta}A)^{2w}=(-e^{2i\delta}\mathbb{I})^w=\mathbb{I}$, as $A^2=-\mathbb{I}$ and $(-e^{2i\delta})^w=\mathbb{I}$ for all $w$. This implies that even powers of $\tau$ will satisfy $(e^{i\delta}A)^\tau=\mathbb{I}$. As for odd $\tau$, we can write $\tau=2s+1$ for some integer $s$ to obtain the following:
\begin{equation}
    (e^{i\delta}A)^\tau=(e^{i\delta}A)^{2s+1}=e^{i\delta}A
\end{equation}
This cannot equate to identity, as we showed in lemma \ref{lma:n=1} that for $\tau=1$ no parametrization of $A$ can make $e^{i\delta}A=\mathbb{I}$. Thus, $\tau$ must be even to have a finite continuum limit as defined in Eq.\eqref{eq:ham}.
\qed\end{proof}
Because the constraints on $\tau$ and $\theta_0$ hold true for all $l$ from the last two lemmas, we will choose $l=0$ for the remainder of the proof without loss of generality.
\begin{lemma}\label{lma:CTLimitH}
Let $\theta_{0x}=2\pi m+\nu \pi$ and $\theta_{0y}=2\pi t+(1-\nu)\pi$, where $\nu$ parametrizes the constraints in Eq.~\eqref{eq:constraints}. The continuous time limit will exist if $H$ is the following:
\begin{equation}
    \begin{split}
        \pmb{\nu=0:}\\
        H =\frac{1}{4}\big[&\theta_{1x}\big(S_x^2R_z(\zeta_{0x})+S_y^{2}R_z(2\zeta_{0y}+2\phi_{0x}-2\phi_{0y})\big)\\
        +&\theta_{1y}\big(R_z(-2\phi_{0y})+S_x^2S_y^{2}R_z(2\zeta_{0x}+2\phi_{0x}+2\zeta_{0y})\big)\big]\sigma_y
    \end{split}
\end{equation}

\begin{equation}
    \begin{split}
        \pmb{\nu=1:}\\
     \hspace{0.9cm}   H =\frac{1}{4}\big[&\theta_{1x}\big(S_x^2R_z(\zeta_{0x})+S_y^{-2}R_z(-2\zeta_{0y}-2\phi_{0x}-2\phi_{0y})\big)\\
        +&\theta_{1y}\big(R_z(-2\phi_{0y})+S_x^2S_y^{-2}R_z(2\zeta_{0x}-2\phi_{0x}-2\zeta_{0y})\big)\big]\sigma_y  
    \end{split}
\end{equation}

\end{lemma}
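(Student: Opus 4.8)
\emph{Proof proposal.} The plan is to carry the expansion in Eq.~\eqref{eq:reduced1sc^n} through the limit in Eq.~\eqref{eq:ham} once the constraints of Lemmas~\ref{lma:constraints}--\ref{lma:n} are imposed, and then read off $H$ by a bounded sequence of $SU(2)$ rewritings. Under those constraints (take $l=0$, as allowed) one has $(e^{i\delta}A)^\tau=\mathbb I$ and $A^2=-\mathbb I$, both established in the proof of Lemma~\ref{lma:n}. Substituting $(e^{i\delta}A)^\tau=\mathbb I$ into Eq.~\eqref{eq:reduced1sc^n}, and dropping the $O(\Delta_t^2)$ remainder by Lemma~\ref{lma:Odt^2}, Eq.~\eqref{eq:ham} becomes $\widehat H=\tfrac{1}{2\tau}A^{-1}\sum_{j=0}^{\tau-1}A^{-j}BA^j$. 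Since $A^2=-\mathbb I$ the conjugation $A^{-j}BA^j$ is $2$-periodic in $j$, so for even $\tau$ the sum reduces to $\tfrac{\tau}{2}(B+A^{-1}BA)$; together with $A^{-1}=-A$ and $A^{-2}=-\mathbb I$ this collapses to
\begin{equation}\label{eq:Hanticomm}
\widehat H=-\tfrac14\{A,B\}=-\tfrac14\bigl(AB+BA\bigr).
\end{equation}
In particular the limit exists whenever the earlier constraints hold, and it only remains to evaluate Eq.~\eqref{eq:Hanticomm} for $\nu=0$ and $\nu=1$.

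Next I would put $A$ and $B$ into normal form. Inserting $\theta_{0x}=2\pi m+\nu\pi$, $\theta_{0y}=2\pi t+(1-\nu)\pi$, for each fixed $\nu$ exactly one of $R_y(\theta_{0x}),R_y(\theta_{0y})$ reduces to $\pm\mathbb I$ and the other to $\mp i\sigma_y$ up to sign, so one of $A_x,A_y$ is a pure $R_z$ rotation and the other is $\propto R_z(\cdot)\sigma_y$; the residual signs $(-1)^m,(-1)^t$ enter $A$ and $B$ with the same total parity and drop out of Eq.~\eqref{eq:Hanticomm}. With the elementary relations $\sigma_zR_z(\cdot)=R_z(\cdot)\sigma_z$, $\sigma_yR_z(\cdot)=R_z(-\cdot)\sigma_y$, $\sigma_yR_z(\cdot)\sigma_y=R_z(-\cdot)$ and $\sigma_z\sigma_y=-i\sigma_x$, I would bring $A=A_xA_y$ to the form $\mp i\,R_z(\gamma_\nu)\sigma_y$ for an explicit angle $\gamma_\nu$ linear in the $\zeta'_{0j},\phi_{0j}$, and then, starting from $B_j=\zeta_{1j}\sigma_zA_j+\theta_{1j}\sigma_yR_z(-2\zeta'_{0j})A_j+\phi_{1j}A_j\sigma_z$, assemble $B=A_xB_y+B_xA_y$ into the shape $B=\mu_\nu R_z(\gamma_\nu)\sigma_x-i\theta_{1x}R_z(\alpha^{(x)}_\nu)-i\theta_{1y}R_z(\alpha^{(y)}_\nu)$, where $\mu_\nu$ absorbs all the $\zeta_{1j},\phi_{1j}$ dependence and $\alpha^{(x)}_\nu,\alpha^{(y)}_\nu$ are explicit linear combinations of the $\zeta'_{0j},\phi_{0j}$. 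The structural point is that the $\zeta_1,\phi_1$ part of $B$ comes from inserting a $\sigma_z$ into some $A_j$ and therefore carries the \emph{same} rotation $R_z(\gamma_\nu)$ that appears in $A$.

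Finally I would evaluate Eq.~\eqref{eq:Hanticomm} and undo the Fourier transform. The $\sigma_x$ piece of $B$ drops out identically: the two copies of $R_z(\gamma_\nu)$ cancel against each other and one is left with $\sigma_y\sigma_x+\sigma_x\sigma_y=0$; hence no $\zeta_1$ or $\phi_1$ survives --- exactly the mechanism already observed in the $1$D analysis of Ref.~\cite{manighalam} --- and only $\theta_{1x},\theta_{1y}$ remain. For the surviving pure-$R_z$ terms I would use $R_z(\gamma)\sigma_yR_z(\alpha)+R_z(\alpha)R_z(\gamma)\sigma_y=\bigl(R_z(\gamma-\alpha)+R_z(\gamma+\alpha)\bigr)\sigma_y$, so that each $\theta_{1j}$ produces two $R_z$ rotations times $\sigma_y$; translating Fourier variables back to shift operators via $\widehat{S_j^{\pm2}}=R_z(\mp4k_j)$ and absorbing the $k$-independent phases into the $R_z$ arguments gives the displayed $H$ for $\nu=0$, and an identical computation --- now with $A_x$ off-diagonal and $A_y$ diagonal --- gives the $\nu=1$ case. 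The signs in the powers of $S_x,S_y$ (e.g.\ $S_y^{2}$ for $\nu=0$ against $S_y^{-2}$ for $\nu=1$) are fixed by whether a given $k$-bearing $R_z$ rotation must be commuted past a $\sigma_y$ from an off-diagonal coin factor, since $\sigma_yR_z(\cdot)\sigma_y=R_z(-\cdot)$ flips $k\mapsto-k$.

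The difficulty here is organisational rather than conceptual: propagating the handful of $R_z$ phases cleanly through Eq.~\eqref{eq:Hanticomm} and matching them to the correctly signed powers of $S_x$ and $S_y$. The cancellation of the $\zeta_1,\phi_1$ contributions, which looks like the delicate point, is actually forced once one observes that their $R_z$ factor coincides with that of $A$, after which it reduces to $\sigma_y\sigma_x+\sigma_x\sigma_y=0$; the genuine risk of error lies in the dozen-odd $R_z$-composition steps and in the Fourier-to-position translation.
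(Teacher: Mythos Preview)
Your proposal is correct and follows essentially the same route as the paper: reduce Eq.~\eqref{eq:ham} to $\widehat H=-\tfrac14\{A,B\}$ via $(e^{i\delta}A)^\tau=\mathbb I$ and $A^2=-\mathbb I$, then expand $\{A,B\}$ so that the $\zeta_{1j},\phi_{1j}$ contributions cancel and only the $\theta_{1j}$ terms survive, and finally undo the Fourier transform. The only cosmetic difference is that the paper (Appendix~\ref{app:commutatorExpansion}) carries out the $\{A,B\}$ expansion term-by-term and treats $\nu=0,1$ uniformly through the relations $\sigma_zA_x=(-1)^\nu A_x\sigma_z$, $\sigma_zA_y=(-1)^{\nu+1}A_y\sigma_z$, whereas you first bring $A$ and $B$ to explicit $R_z$-times-Pauli normal forms and handle the two values of $\nu$ separately.
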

\begin{proof}
We begin by using that $A^2=-1$, $A^{-1}=-A$, and $(e^{i\delta}A)^\tau=\mathbb{I}$ to reduce Eq.~\eqref{eq:reduced1sc^n}: 
\begin{equation}\label{eq:reduced1sc^n_2}
    \begin{split}
    \widehat{W}^\tau&=e^{i\delta \tau}(A-\frac{i \Delta_t}{2} B)^\tau\\
    &=(e^{i\delta}A)^\tau(\mathbb{I}-\frac{i \Delta_t}{2}A^{-1}\sum_{j=0}^{\tau-1}A^{-j}BA^j+O(\Delta_t^2))\\
    &=\mathbb{I}+\frac{i \Delta_t}{2}A\sum_{j=0}^{\tau-1}(-1)^jA^{j}BA^j+O(\Delta_t^2).
    \end{split}
\end{equation}
Now we evaluate the sum by splitting it up into even and odd terms:
\begin{equation}
    \begin{split}
        A\sum_{j=0}^{\tau-1}(-1)^j A^{j}BA^j&=A(\sum_{j=odds}^{\tau-1}(-1)^j A^{j}BA^j+\sum_{j=evens}^{\tau-2}(-1)^j A^{j}BA^j)\\
        &=A\frac{\tau}{2}(-ABA+B)=\frac{\tau}{2}\{A,B\}
    \end{split}
\end{equation}

Leaving a detailed proof to Appendix~\ref{app:commutatorExpansion}, we have the following for $\{A,B\}$:
\begin{equation}
    \begin{split}
        \{A,B\}=&-\theta_{1y}(R_z(-2\phi_{0y})+R_z(2\zeta'_{0x}+2\phi_{0x}(-1)^\nu+2\zeta'_{0y}(-1)^\nu))\sigma_y\\
        &-\theta_{1x}(R_z(2\zeta'_{0x})+R_z(2\zeta'_{0y}(-1)^\nu-2\phi_{0y}+2\phi_{0x}(-1)^\nu))\sigma_y
    \end{split}
\end{equation}
Now we have the following for Eq.~\eqref{eq:reduced1sc^n_2}:
\begin{equation}
    \begin{split}
        \widehat{W}^\tau&=\mathbb{I}+\frac{i \Delta_t}{2}A\sum_{j=0}^{\tau-1}(-1)^jA^{j}BA^j+O(\Delta_t^2)\\
        &=\mathbb{I}+\frac{i\tau\Delta_t}{4}\{A,B\}+O(\Delta_t^2)\\
        &=\mathbb{I}-\frac{i\tau\Delta_t}{4}(\theta_{1y}(R_z(-2\phi_{0y})+R_z(2\zeta'_{0x}+2\phi_{0x}(-1)^\nu+2\zeta'_{0y}(-1)^\nu))\\
        &+\theta_{1x}(R_z(2\zeta'_{0x})+R_z(2\zeta'_{0y}(-1)^\nu-2\phi_{0y}+2\phi_{0x}(-1)^\nu)))\sigma_y+O(\Delta_t^2)
    \end{split}
\end{equation}
Now we evaluate the limit in Eq.~\eqref{eq:ham}:
\begin{equation}
    \begin{split}
        \widehat{H}&=i\lim\limits_{\Delta_t \to 0}\frac{(\widehat{S_x}C_x\widehat{S_y}C_y)^\tau-\mathbb{I}}{\tau\Delta_t}\\
        &=\frac{1}{4}(\theta_{1y}(R_z(-2\phi_{0y})+R_z(2\zeta'_{0x}+2\phi_{0x}(-1)^\nu+2\zeta'_{0y}(-1)^\nu))\\
        &+\theta_{1x}(R_z(2\zeta'_{0x})+R_z(2\zeta'_{0y}(-1)^\nu-2\phi_{0y}+2\phi_{0x}(-1)^\nu)))\sigma_y
    \end{split}
\end{equation}
Converting to real space, we get the following
\begin{equation}\label{eq:CTLimit_H_alpha}
    \begin{split}
        H=\frac{1}{4}\big[&\theta_{1x}\big(S_x^2R_z(\zeta_{0x})+S_y^{2(-1)^\nu}R_z(2\zeta_{0y}(-1)^\nu+2\phi_{0x}(-1)^\nu-2\phi_{0y})\big)\\
        +&\theta_{1y}\big(R_z(-2\phi_{0y})+S_x^2S_y^{2(-1)^\nu}R_z(2\zeta_{0x}+2\phi_{0x}(-1)^\nu+2\zeta_{0y}(-1)^\nu)\big)\big]\sigma_y    
    \end{split}
\end{equation}
The equations in lemma~\ref{lma:CTLimitH} are the same as Eq.~\eqref{eq:CTLimit_H_alpha} but for particular choices of $\nu$.
\qed\end{proof}

Note Eq.~\eqref{eq:CTLimit_H_alpha} reduces to the $H$ found in Ref.~\cite{manighalam} when the 1D limit is taken. We conclude the discussion with the following theorem encompassing our results:~

\begin{theorem}\label{thm:ctdsUnitaries}
    Let $C_j(\delta_j,\zeta_j,\theta_j,\phi_j)$ be the $2\times2$ unitary matrix in Eq. \eqref{eq:rotations}, with the set of angles {$\zeta_j$, $\theta_j$, $\phi_j$} parametrizing $C_j$ depending on $\Delta_t$ as: $\zeta_j=\zeta_{0j}+\zeta_{1j}\Delta_t$, $\theta_j=\theta_{0j}+\theta_{1j}\Delta_t$, and $\phi_j=\phi_{0j}+\phi_{1j}\Delta_t$, with $\phi_{0j},\zeta_{0j},\theta_{0j},\phi_{1j},\zeta_{1j},\theta_{1j}\in\mathbb{R}$ constants. The continuous time limit as defined in Eq.~\eqref{eq:real_space_ham} will exist for such a class of coins  if and only if $\theta_{0x}=2\pi m+\nu\pi$ and $\theta_{0y}=2\pi t+(1-\nu)\pi$ (for $\nu=0$ or $1$), $\delta=\delta_x+\delta_y=-\frac{p\pi}{2}$ (for odd integer $p$), and $n$ is even. The Hamiltonian obtained in such a limit, for each choice of $\nu$, is the following:
    \begin{equation}\label{eq:alpha0}
        \begin{split}
            H =\frac{1}{4}\big[&\theta_{1x}\big(S_x^2R_z(\zeta_{0x})+S_y^{2(-1)^\nu}R_z((-1)^\nu2\zeta_{0y}+(-1)^\nu 2\phi_{0x}-2\phi_{0y})\big)\\
            +&\theta_{1y}\big(R_z(-2\phi_{0y})+S_x^2S_y^{2(-1)^\nu}R_z(2\zeta_{0x}+2(-1)^\nu\phi_{0x}+2(-1)^\nu\zeta_{0y})\big)\big]\sigma_y
        \end{split}
    \end{equation}
   \end{theorem}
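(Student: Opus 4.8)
The statement is essentially a consolidation of Lemmas~\ref{lma:Odt^2}--\ref{lma:CTLimitH}, so the plan is to prove the two implications separately and then read off $H$. For necessity, I would start from Eq.~\eqref{eq:reduced1sc^n}: for the limit in Eq.~\eqref{eq:ham} to be finite the leading term must vanish, which forces $(e^{i\delta}A)^\tau=\mathbb{I}$. Lemma~\ref{lma:n=1} excludes $\tau=1$, and Lemma~\ref{lma:n} sharpens this to the requirement that $\tau$ (equivalently $n$) be even, since for odd $\tau$ one has $(e^{i\delta}A)^\tau=e^{i\delta}A$, which cannot be the identity because $A$ carries irremovable $k_x,k_y$ dependence. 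Next, Lemma~\ref{lma:constraints} extracts from $(e^{i\delta}A)^\tau=\mathbb{I}$ — by diagonalizing $A$, setting its eigenvalues equal to a $\tau^{\text{th}}$ root of unity, and imposing that the resulting identity $f(k_x,k_y)=0$ hold together with $\partial_{k_x}f=\partial_{k_y}f=0$ for all $(k_x,k_y)$ — the two conditions $\cos(\theta_{0x}/2)\cos(\theta_{0y}/2)=0$ and $\sin(\theta_{0x}/2)\sin(\theta_{0y}/2)=0$, i.e. $\theta_{0x}=2\pi m+\nu\pi$, $\theta_{0y}=2\pi t+(1-\nu)\pi$ with $\nu\in\{0,1\}$, and $c=\cos(2\pi l/\tau-\delta)=0$, i.e. $\delta=2\pi l/\tau-p\pi/2$ for odd $p$. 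Finally Lemma~\ref{lma:Odt^2} shows the $O(\Delta_t^2)$ parts of the jets in Eq.~\eqref{eq:scaling} never enter the limit, so no further constraint arises from them. Together these give necessity.

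To obtain the theorem's cleaner form $\delta=-p\pi/2$, I would note that the constraints on $\tau$ and on $\theta_{0x},\theta_{0y}$ are independent of the integer $l$, and that $l$ enters only through the overall scalar phase $e^{2\pi i l/\tau}$ in the eigenvalue condition for $e^{i\delta}A$; this phase can be absorbed into a redefinition of $\delta$. Hence one may take $l=0$ without loss of generality (as already remarked before Lemma~\ref{lma:CTLimitH}), which reduces $\delta=2\pi l/\tau-p\pi/2$ to $\delta=\delta_x+\delta_y=-p\pi/2$.

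For sufficiency and the explicit Hamiltonian, I would invoke Lemma~\ref{lma:CTLimitH} directly: assuming $\tau$ even and the $\theta_0,\delta$ constraints, its proof uses $A^2=-\mathbb{I}$, $A^{-1}=-A$, and $(e^{i\delta}A)^\tau=\mathbb{I}$ to collapse Eq.~\eqref{eq:reduced1sc^n} to $\widehat{W}^\tau=\mathbb{I}+\tfrac{i\tau\Delta_t}{4}\{A,B\}+O(\Delta_t^2)$, then evaluates $\{A,B\}$ (details in Appendix~\ref{app:commutatorExpansion}) and passes to real space, yielding exactly Eq.~\eqref{eq:CTLimit_H_alpha}, which is Eq.~\eqref{eq:alpha0}. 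Since this limit is finite and well defined for every real value of the remaining parameters $\zeta_{0j},\phi_{0j},\zeta_{1j},\theta_{1j},\phi_{1j}$, the conditions are also sufficient. Combining the two directions proves the theorem, with $H$ given by Eq.~\eqref{eq:alpha0}, and specializing $\nu=0,1$ recovers the two displayed forms.

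The step I expect to require the most care is closing the logical loop between necessity and sufficiency: one must check that \emph{no} additional hidden constraint on the surviving parameters appears when the limit is actually taken — in other words, that the conditions listed are not merely necessary for finiteness of the leading term but genuinely characterize existence of the limit. This is exactly what Lemma~\ref{lma:CTLimitH} secures, since the Hamiltonian it produces is manifestly finite for all admissible parameter values; I would make this "tightness" remark explicit. A secondary subtlety is the $l$-reduction of the previous paragraph, which should be stated rather than left implicit, since the theorem drops the $2\pi l/\tau$ term that appears in Lemma~\ref{lma:constraints}. The remaining manipulations (the split of $\sum_j(-1)^j A^j B A^j$ into even/odd parts giving $\tfrac{\tau}{2}\{A,B\}$, and the anticommutator computation) are routine and can be cited from the lemmas and Appendices~\ref{app:Wexpansion} and~\ref{app:commutatorExpansion}.
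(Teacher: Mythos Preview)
Your proposal is correct and mirrors the paper's approach exactly: the paper does not give a separate proof of Theorem~\ref{thm:ctdsUnitaries} but presents it as a consolidation of Lemmas~\ref{lma:Odt^2}--\ref{lma:CTLimitH}, with the $l=0$ reduction handled by the one-line remark between Lemmas~\ref{lma:n} and~\ref{lma:CTLimitH}. Your explicit separation into necessity (Lemmas~\ref{lma:n=1}, \ref{lma:constraints}, \ref{lma:n}) and sufficiency (Lemma~\ref{lma:CTLimitH}), together with the ``tightness'' remark that the resulting $H$ is manifestly finite for all remaining parameters, is a slightly more careful write-up than the paper's but follows the identical logical route.
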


Notice that the above Hamiltonian is very general and encompasses the standard Dirac Hamiltonian on the 2D lattice where the cross-terms finite derivatives are not included.

\section{Plastic Quantum Walk in 2D+1}\label{sec:plasticQW}
In this section we will be analyzing the space of coins for which a continuous spacetime and continuous time exists. We begin by explicitly stating the problem. We use the same model as from section \ref{sec:model}, but now we consider an arbitrary $\Delta$. In Fourier space they are represented by the following operators:
\begin{equation}
    \begin{split}
        S_x&=e^{ik_x\Delta \sigma_z}\\
        S_y&=e^{ik_y\Delta \sigma_z}.
    \end{split}
\end{equation}
Now we parametrize the time and space steps the same way as in Ref.~\cite{molfetta2019quantum}:
\begin{equation}\label{eq:spacetimeParametrization}
    \Delta_t=\varepsilon\qquad\Delta=\varepsilon^a
\end{equation}
where $a\in[0,1]$, and $a=1$ is the continuous spacetime limit and $a=0$ is the continuous time limit. A word on the stroboscopic step ($\tau$). We wish to find a quantum walk which admits both a continuous spacetime and continuous time limit, and we know the $\tau$ must be even for the walk to admit a continuous time limit, so we only consider $\tau=2$ in this section. 
In the following, we will find for which parameters $\delta_j$, $\zeta_j$, $\theta_j$, and $\phi_j$ the continuous spacetime limit exists and converges to:
\begin{equation}\label{eq:ham2}
    \begin{split}
        \widehat{H}\hat{\Psi}(t)&=i\partial_t\hat{\Psi}(t)=i\lim\limits_{\varepsilon \to 0}\frac{\widehat{W}^2-\mathbb{I}}{2\varepsilon}\hat{\Psi}(t)\\
        &=i\lim\limits_{\varepsilon \to 0}\frac{(e^{ik_x\varepsilon^a\sigma_z}C_xe^{ik_y\varepsilon^a\sigma_z}C_y)^2-\mathbb{I}}{2\varepsilon}\hat{\Psi}(t).
    \end{split}
\end{equation}
We will then intersect these constraints with those found in the previous section to determine the quantum walks which admit a continuous time and continuous spacetime limit. In order to find the continuum limit in equation \eqref{eq:ham2}, let us first parametrize the $\theta$ parameter in the following way:
\begin{equation}\label{eq:scaling2}
    \begin{split}
        \theta_j&=\theta_{0j}+\theta_{1j}\varepsilon^b\\
    \end{split}
\end{equation}
where $b\in(0,1]$. Altogether, these jets define a family of QWs indexed by $\varepsilon$, whose embedding in spacetime, and defining angles, depend on $\varepsilon$. Notice that we only expand $\theta$ in powers of $\varepsilon^b$, this is because the Eq.\eqref{eq:alpha0} doesn't depend on the first order of $\zeta$ and $\phi$ and for plasticity we are interested in the smallest subset of constraint conditions to derive the continuum limit.\\
Now we expand $S_mC_m$ in powers of $\varepsilon$, where $m=x,y$:
\begin{equation}
    \begin{split}
        S_mC_m&=e^{\delta_m}R_z(\zeta'_{m})R_y(\theta_{m})R_z(\phi_{m})\\
        &=e^{\delta_m}R_z(\zeta'_{m})\sum_{n_m=-\infty}^{\infty}\frac{(-\frac{i\theta_{1m}\sigma_y}{2})^{n_m}}{n_m!}\varepsilon^{n_mb}R_y(\theta_{0m})R_z(\phi_{m})\\
        &=e^{\delta_m}R_z(\zeta_{m})\sum_{l_m=-\infty}^{\infty}\frac{(ik_m\sigma_z)^{l_m}}{l_m!}\varepsilon^{l_ma}\sum_{n_m=-\infty}^{\infty}\frac{(-\frac{i\theta_{1m}\sigma_y}{2})^{n_m}}{n_m!}\varepsilon^{n_mb}R_y(\theta_{0m})R_z(\phi_{m})\\
        &=e^{\delta_m}\sum_{l_m,n_m}\varepsilon^{l_ma+n_mb}\frac{(ik_m)^{l_m}(-\frac{i\theta_{1m}}{2})^{n_m}}{l_m!n_m!}R_z(\zeta_{m})\sigma_z^{l_m}\sigma_y^{n_m}R_y(\theta_{0m})R_z(\phi_{m})
    \end{split}
\end{equation}
Next we use the above equation to expand $S_xC_xS_yC_y$ in powers of $\varepsilon$:
\begin{equation}
    \begin{split}
        S_xC_xS_yC_y&=e^{i(\delta_x+\delta_y)}\sum_{\substack{l_x,n_x\\l_y,n_y}}\varepsilon^{a(l_x+l_y)+b(n_x+n_y)}\frac{(ik_x)^{l_x}(ik_y)^{l_y}(-\frac{i\theta_{1x}}{2})^{n_x}(-\frac{i\theta_{1y}}{2})^{n_y}}{l_x!l_y!n_x!n_y!}\\
        &\times R_z(\zeta_{x})\sigma_z^{l_x}\sigma_y^{n_x}R_y(\theta_{0x})R_z(\phi_{x})R_z(\zeta_{y})\sigma_z^{l_y}\sigma_y^{n_y}R_y(\theta_{0y})R_z(\phi_{y})\\
        &=e^{i(\delta_x+\delta_y)}\sum_{\substack{l_x,n_x\\l_y,n_y}}\varepsilon^{a(l_x+l_y)+b(n_x+n_y)}\nu_{l_xl_yn_xn_y}\hat{\Gamma}_{l_xl_yn_xn_y},
    \end{split}
\end{equation}
where
\begin{equation}
    \nu_{l_xl_yn_xn_y}=\frac{(ik_x)^{l_x}(ik_y)^{l_y}(-\frac{i\theta_{1x}}{2})^{n_x}(-\frac{i\theta_{1y}}{2})^{n_y}}{l_x!l_y!n_x!n_y!}
\end{equation}
and
\begin{equation}\label{eq:unitaryCoeffs}
    \begin{split}
        \hat{\Gamma}_{l_xl_yn_xn_y}&=R_z(\zeta_{x})\sigma_z^{l_x}\sigma_y^{n_x}R_y(\theta_{0x})R_z(\phi_{x})R_z(\zeta_{y})\sigma_z^{l_y}\sigma_y^{n_y}R_y(\theta_{0y})R_z(\phi_{y})
    \end{split}
\end{equation}
Now we have the following for $(S_xC_xS_yC_y)^2$:
\begin{equation}\label{eq:(sxcxsycy)^2}
    \begin{split}
        (S_xC_xS_yC_y)^2&=e^{2i(\delta_x+\delta_y)}(\sum_{\substack{l_x,n_x\\l_y,n_y}}\varepsilon^{a(l_x+l_y)+b(n_x+n_y)}\nu_{l_xl_yn_xn_y}\hat{\Gamma}_{l_xl_yn_xn_y})^2\\
        &=e^{2i(\delta_x+\delta_y)}\sum_{\substack{l_{1x},l_{1y}\\l_{2x},l_{2y}}}\sum_{\substack{n_{1x},n_{1y}\\n_{2x},n_{2y}}}\varepsilon^{a(l_{1x}+l_{1y}+l_{2x}+l_{2y})+b(n_{1x}+n_{1y}+n_{2x}+n_{2y})}\\
        &\times \nu_{l_{1x}l_{1y}n_{1x}n_{1y}}\nu_{l_{2x}l_{2y}n_{2x}n_{2y}}\hat{\Gamma}_{l_{1x}l_{1y}n_{1x}n_{1y}}\hat{\Gamma}_{l_{2x}l_{2y}n_{2x}n_{2y}}\\
        &=e^{2i(\delta_x+\delta_y)}\nu_{0000}^2\hat{\Gamma}_{0000}^2\\
        &+e^{2i(\delta_x+\delta_y)}\sum_{\substack{l_{1x},l_{1y}\\l_{2x},l_{2y}\\\neq (0,0,0,0)}}\sum_{\substack{n_{1x},n_{1y}\\n_{2x},n_{2y}\\\neq (0,0,0,0)}}\varepsilon^{a(l_{1x}+l_{1y}+l_{2x}+l_{2y})+b(n_{1x}+n_{1y}+n_{2x}+n_{2y})}\\
        &\times \nu_{l_{1x}l_{1y}n_{1x}n_{1y}}\nu_{l_{2x}l_{2y}n_{2x}n_{2y}}\hat{\Gamma}_{l_{1x}l_{1y}n_{1x}n_{1y}}\hat{\Gamma}_{l_{2x}l_{2y}n_{2x}n_{2y}}.
    \end{split}
\end{equation}

Now we have the following lemmas:
\begin{lemma}
    A DTQW will allow both a continuous spacetime limit (as in Eq.~\eqref{eq:ham2}) and a continuous time limit (as in Eq.~\eqref{eq:real_space_ham}) if and only if $\theta_{0x}=2\pi m$ and $\theta_{0y}=2\pi t+\pi$.
\end{lemma}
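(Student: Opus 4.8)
The plan is to reduce the statement, via Theorem~\ref{thm:ctdsUnitaries}, to a dichotomy and then break the tie with the continuous spacetime expansion. Since the walk is required to admit the continuous time limit of Eq.~\eqref{eq:real_space_ham}, Theorem~\ref{thm:ctdsUnitaries} applied with $\tau=2$ (the only stroboscopic step considered in this section) already forces $\theta_{0x}=2\pi m+\nu\pi$, $\theta_{0y}=2\pi t+(1-\nu)\pi$ with $\nu\in\{0,1\}$, and $\delta=\delta_x+\delta_y=-p\pi/2$ for odd $p$, leaving $\zeta_{0j},\phi_{0j},\theta_{1j}$ free. So the content of the lemma is precisely this: of the two coin parities $\nu=0$ and $\nu=1$ permitted by the continuous time limit, only $\nu=0$ (i.e.\ $\theta_{0x}=2\pi m$, $\theta_{0y}=2\pi t+\pi$, which is $C_x$ diagonal and $C_y$ purely off-diagonal) is compatible with the continuous spacetime limit of Eq.~\eqref{eq:ham2}, and conversely that parity does admit it.

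For the continuous spacetime limit I would set $a=1$ in \eqref{eq:spacetimeParametrization} and work from the expansion \eqref{eq:(sxcxsycy)^2}. First, the $\varepsilon^{0}$ coefficient of $\widehat W^{2}$ is $e^{2i\delta}\hat\Gamma_{0000}^{2}$; inserting the Theorem's constraints and using $R_y(2\pi m)=(-1)^m\mathbb I$ and $R_y(2\pi t+\pi)=-i(-1)^{t}\sigma_y$ (together with their $\nu$-swapped versions) one checks that $\hat\Gamma_{0000}^{2}=-\mathbb I$, hence $e^{2i\delta}\hat\Gamma_{0000}^{2}=\mathbb I$, so there is no obstruction at this order \emph{for either} $\nu$ — the distinction must therefore be read off at the next order. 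Next I would pin down the scaling exponent $b$ of \eqref{eq:scaling2}: any $b<1$ produces a term of order $\varepsilon^{b}$ in $\widehat W^{2}-\mathbb I$ whose coefficient is the anticommutator of $M:=e^{i\delta}\hat\Gamma_{0000}$ with the part of $C_xC_y$ that is linear in $\theta_{1x},\theta_{1y}$, which is (up to constants) exactly the structure of the nonzero continuous time Hamiltonian \eqref{eq:alpha0}; dividing by $2\varepsilon$ then diverges, so $b=1$ is forced — again for both $\nu$.

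The decisive step is the $O(\varepsilon)$ coefficient. Writing $\widehat W=M+\varepsilon N+O(\varepsilon^{2})$ with $M^{2}=\mathbb I$, one gets $\widehat H=\tfrac{i}{2}\{M,N\}$, where $N$ collects four contributions: the linearizations $\mathbb I+ik_x\varepsilon\sigma_z$ of $S_x$ and $\mathbb I+ik_y\varepsilon\sigma_z$ of $S_y$ (so $N$ contains $ik_x\sigma_z M$ and $ik_y C_x^{(0)}\sigma_z C_y^{(0)}$), and the $\theta_{1x},\theta_{1y}$–derivatives of $C_x,C_y$. Here the parity $\nu$ enters only through whether $C_x^{(0)}$ or $C_y^{(0)}$ is the diagonal ($R_z$-type) factor and which is the purely off-diagonal ($\propto\sigma_y$) factor, hence whether $\sigma_z$ commutes or anticommutes through it; evaluating the resulting anticommutators $\{M,\sigma_z M\}$, $\{M,M\sigma_z\}$, and $\{M,\text{(diagonal)}\}$ with $M$ off-diagonal and $M^2=\mathbb I$ is routine Pauli algebra. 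I expect this last computation to be the main obstacle: one must show that for $\nu=1$ the $O(\varepsilon)$ term fails to deliver the limit demanded by \eqref{eq:ham2} — the only feature that can break the $x\leftrightarrow y$ symmetry between $\nu=0$ and $\nu=1$ is the \emph{ordered} split-step $W=V_xV_y$ (the $y$-coin acting first), so the argument must exploit that ordering — whereas for $\nu=0$ the surviving terms assemble into a finite Hamiltonian, which one then checks coincides with the $\Delta\to 0$ (all $S_m\to\mathbb I$) specialization of \eqref{eq:alpha0}, completing the ``if'' direction.
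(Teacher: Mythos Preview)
Your plan diverges from the paper's in a substantive way, and the step you single out as the main obstacle is one the paper never attempts---because, as you half-suspect, it cannot be carried out.

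The paper's argument for this lemma lives entirely at order $\varepsilon^{0}$. It requires $e^{2i\delta}\nu_{0000}^{2}\hat\Gamma_{0000}^{2}=\mathbb I$ in Eq.~\eqref{eq:(sxcxsycy)^2}, diagonalises $\hat\Gamma_{0000}$, and obtains the $k$-\emph{independent} analogue of Eq.~\eqref{eq:constraint_function_simple}, namely Eq.~\eqref{eq:constraint_function_simple2}. Because the $k$-dependence that in Lemma~\ref{lma:constraints} forced $W_1=W_2=0$ is now absent, Eq.~\eqref{eq:constraint_function_simple2} admits \emph{more} solution families (e.g.\ those where the cosines in $a_1\pm a_2$ vanish). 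The paper then simply intersects this solution set with the continuous-time constraints of Section~\ref{sec:limits} and records $\theta_{0x}=2\pi m$, $\theta_{0y}=2\pi t+\pi$, explicitly stating that ``for the purpose of this work, we only consider this last set of parameter constraints.'' It never claims to exclude $\nu=1$, and indeed cannot: Eq.~\eqref{eq:constraint_function_simple2} is manifestly symmetric under $\theta_{0x}\leftrightarrow\theta_{0y}$, and your own order-$\varepsilon^{0}$ check already shows $\hat\Gamma_{0000}^{2}=-\mathbb I$ for \emph{both} parities.

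Consequently your program of ``breaking the tie'' at order $\varepsilon$ via the split-step ordering $V_xV_y$ has no counterpart in the paper and is misdirected: the anticommutator $\tfrac{i}{2}\{M,N\}$ you write down is perfectly finite for $\nu=1$ as well (the roles of the diagonal and off-diagonal coin factors are merely exchanged), so no obstruction will materialise there. Two smaller mismatches compound this: you fix $a=1$, whereas the lemma (and Eq.~\eqref{eq:ham2}) are stated for general $a\in[0,1]$; and you argue $b=1$ is forced, whereas the paper keeps $b\in(0,1]$ throughout and in fact takes $a=b=1/2$ in the worked example. The conditions on $a,b$ and the vanishing of the sub-leading terms are the content of the \emph{subsequent} lemmas, not this one. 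In short: replace your higher-order tie-breaking attempt by the paper's zeroth-order eigenvalue argument and its intersection with the continuous-time constraints; the selection of $\nu=0$ over $\nu=1$ is a choice, not a deduction.
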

\begin{proof}
    For the limit in Eq.~\eqref{eq:ham2} to exist, $e^{2i(\delta_x+\delta_y)}\nu_{0000}^2\hat{\Gamma}_{0000}^2$ must equal identity in Eq.~\eqref{eq:(sxcxsycy)^2}, so we have the following constraint:
    \begin{equation}\label{eq:ctcs_angle_constraints}
        \begin{split}
            &e^{2i(\delta_x+\delta_y)}\nu_{0000}^2\hat{\Gamma}_{0000}^2\\
            &=e^{2i(\delta_x+\delta_y)}R_z(\zeta_{x})R_y(\theta_{0x})R_z(\phi_{x})R_z(\zeta_{y})R_y(\theta_{0y})R_z(\phi_{y})\\
            &\times R_z(\zeta_{x})R_y(\theta_{0x})R_z(\phi_{x})R_z(\zeta_{y})R_y(\theta_{0y})R_z(\phi_{y})=\mathbb{I}
        \end{split}
    \end{equation}
    Going though the same process of finding the eigenvalues of $\nu_{0000}^2\hat{\Gamma}_{0000}$ and setting them equal to a root of unity, we find that constraint yields a similar equation as from the continuous time limit in lemma~\ref{lma:constraints}.It is the following (where $\delta=\delta_x+\delta_y)$):
    \begin{equation}\label{eq:constraint_function_simple2}
        \begin{split}
            f&=\cos(\frac{\theta_{0x}}{2})\cos(\frac{\theta_{0y}}{2})\cos(\frac{\phi_{x}+\phi_{y}+\zeta_{x}+\zeta_{y}}{2})\\
            &-\sin(\frac{\theta_{0x}}{2})\sin(\frac{\theta_{0y}}{2})\cos(\frac{\phi_{y}-\phi_{x}+\zeta_{x}-\zeta_{y}}{2})\\
            &-\cos(\frac{2\pi l}{n}-\delta)=0.        
        \end{split}
    \end{equation}
    As from section~\ref{sec:limits}, we choose $\delta$ such that $\cos(\frac{2\pi l}{n}-\delta)=0$. We see a new set of constraints are available than those found in section~\ref{sec:limits}. The new types of constraints involve $\cos(\frac{\phi_{x}+\phi_{y}+\zeta_{x}+\zeta_{y}}{2})$ equalling zero and any other of the three products in $\sin(\frac{\theta_{0x}}{2})\sin(\frac{\theta_{0y}}{2})\cos(\frac{\phi_{y}-\phi_{x}+\zeta_{x}-\zeta_{y}}{2})$ equalling zero, or $\cos(\frac{\phi_{y}-\phi_{x}+\zeta_{x}-\zeta_{y}}{2})$ equalling zero and any other of the three products in $\cos(\frac{\theta_{0x}}{2})\cos(\frac{\theta_{0y}}{2})\cos(\frac{\phi_{x}+\phi_{y}+\zeta_{x}+\zeta_{y}}{2})$ equalling zero. Another possible set of parameter constraints from Eq.~\eqref{eq:constraint_function_simple2} is $\theta_{0x}=2\pi m$ and $\theta_{0y}=2\pi t+\pi$, which are included in the set of constraints found in the continuous time limit of section~\ref{sec:limits}. For the purpose of this work, we only consider this last set of parameter constraints, as our goal is to develop a DTQW which admits both a continuous time limit as well as a continuous spacetime limit.
\qed\end{proof}
\begin{lemma}
    A DTQW will allow both a continuous spacetime limit (as in Eq.~\eqref{eq:ham2}) and a continuous time limit (as in Eq.~\eqref{eq:real_space_ham}) if and only if $a(l_{1x}+l_{1y}+l_{2x}+l_{2y})+b(n_{1x}+n_{1y}+n_{2x}+n_{2y})=1$ and $a$, $b\in\mathbb{Q}$.
\end{lemma}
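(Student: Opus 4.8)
The plan is to expand $\widehat{W}^2$ in powers of $\varepsilon$ exactly as in Eq.~\eqref{eq:(sxcxsycy)^2}, impose the angle constraints of the preceding lemma, and then determine when the quotient $(\widehat{W}^2-\mathbb{I})/2\varepsilon$ appearing in Eq.~\eqref{eq:ham2} has a finite and nonzero limit. Writing $L=l_{1x}+l_{1y}+l_{2x}+l_{2y}$ and $N=n_{1x}+n_{1y}+n_{2x}+n_{2y}$ for the (nonnegative integer) index sums, the multi-index term of Eq.~\eqref{eq:(sxcxsycy)^2} carries the power $\varepsilon^{aL+bN}$. First I would invoke the preceding lemma so that the $L=N=0$ term $e^{2i(\delta_x+\delta_y)}\nu_{0000}^2\hat{\Gamma}_{0000}^2$ equals $\mathbb{I}$, and then observe that the whole $N=0$ part of $\widehat{W}^2$ collapses to $\mathbb{I}$ as well: it is obtained by setting $\theta_{1x}=\theta_{1y}=0$, which turns $\widehat{W}$ into $e^{i\delta}A$ with the momentum-dependent $A$ of Section~\ref{sec:limits} evaluated at $(k_x\varepsilon^a,k_y\varepsilon^a)$, and Theorem~\ref{thm:ctdsUnitaries} with $\tau=2$ gives $(e^{i\delta}A)^2=\mathbb{I}$ identically in the momenta. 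Hence every surviving term of $\widehat{W}^2-\mathbb{I}$ has $N\ge 1$.

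Next I would regroup $\widehat{W}^2-\mathbb{I}$ according to the value of the exponent $aL+bN$, collecting into a single operator-valued coefficient all $\hat{\Gamma}\hat{\Gamma}$-products that share a common power of $\varepsilon$. Dividing by $2\varepsilon$ and letting $\varepsilon\to 0$ in Eq.~\eqref{eq:ham2}, a group at exponent $e$ contributes $\varepsilon^{e-1}$ times its coefficient, so the limit is finite if and only if every group with nonzero coefficient has $e=aL+bN\ge 1$, and it is a genuine, nonzero Hamiltonian --- not the zero operator --- if and only if the infimum is attained, i.e. there is a non-vanishing group with $aL+bN=1$. This is precisely the first asserted condition. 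Establishing that the groups with $aL+bN<1$ do cancel is where the algebra must be reused: with $\theta_{0x}=2\pi m$ and $\theta_{0y}=2\pi t+\pi$ one has $\hat{\Gamma}_{0000}^2=-\mathbb{I}$ and $\hat{\Gamma}_{0000}$ purely off-diagonal, so the low-order $N\ge 1$ contributions enter through anticommutators that telescope against the identity $(e^{i\delta}A)^2=\mathbb{I}$; I would verify this order by order, reusing the bookkeeping of Appendix~\ref{app:commutatorExpansion}.

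For the rationality statement, note that $L$ and $N$ range over the nonnegative integers, so $aL+bN=1$ is a linear relation with integer coefficients and rational right-hand side, and it must hold for the surviving multi-index while every smaller value $aL'+bN'$ is forbidden (or cancels). Because the limiting Hamiltonian carries several inequivalent $\sigma_y$-structures and $k$-dependencies (compare Eq.~\eqref{eq:alpha0}), which cannot all come from one multi-index, more than one such relation is forced; e.g. a mass-type term with $L=0$, $N\ge 1$ gives $bN=1$, hence $b=1/N\in\mathbb{Q}$, and a kinetic-type term with $L\ge 1$ gives $aL+bN'=1$, hence $a=(1-bN')/L\in\mathbb{Q}$. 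Equivalently, if $a$ or $b$ were irrational and $\mathbb{Q}$-independent of the other, no finite product in the expansion could land exactly on the exponent $1$, and since $a\in[0,1]$ and $b\in(0,1]$ keep the small-index exponents arbitrarily close to $0$ the limit could not be made finite. Conversely, if $a,b\in\mathbb{Q}$ admit a nonnegative-integer solution of $aL+bN=1$ and the angle constraints hold, then every $\varepsilon$-exponent is rational, the finitely many exponents below $1$ cancel by the mechanism above, the exponent-$1$ groups produce a finite $\widehat{H}$, and since the present angle constraints are contained in those of Section~\ref{sec:limits} (as noted in the preceding lemma) the continuous-time limit of Eq.~\eqref{eq:real_space_ham} exists as well; the walk is then plastic.

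I expect the main obstacle to be this rationality/minimality step together with its prerequisite, the exhaustive check that every $\hat{\Gamma}\hat{\Gamma}$-group with exponent strictly below $1$ cancels. That cancellation is what legitimizes treating ``$aL+bN=1$ for the surviving multi-index'' as the defining condition, and only once it is in hand does the Diophantine conclusion $a,b\in\mathbb{Q}$ become forced rather than merely natural.
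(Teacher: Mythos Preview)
Your proposal is substantially more elaborate than the paper's own argument, which is essentially two lines: it simply observes that after dividing Eq.~\eqref{eq:(sxcxsycy)^2} by $2\varepsilon$ and sending $\varepsilon\to 0$, only the terms of order exactly $\varepsilon^1$ survive, which is the equation $a(l_{1x}+l_{1y}+l_{2x}+l_{2y})+b(n_{1x}+n_{1y}+n_{2x}+n_{2y})=1$, and then asserts that integrality of the $l_{vm},n_{vm}$ forces $a,b\in\mathbb{Q}$. In particular, the paper does \emph{not} attempt here the cancellation of sub-unit exponents that you worry about; that is stated as a separate necessary condition in the very next lemma (your Eq.~\eqref{eq:divergenceConstraint}). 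So you are importing into this lemma content that the paper deliberately parcels out across three lemmas and only assembles in Theorem~\ref{thm:2dContSpaceTimeLimit}.

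Two remarks on what your route buys. First, your observation that the entire $N=0$ block of $\widehat{W}^2$ collapses to $\mathbb{I}$---because setting $\theta_{1x}=\theta_{1y}=0$ gives $(e^{i\delta}A)^2=\mathbb{I}$ identically in the momenta---is a clean structural fact the paper never states explicitly; it would streamline several later computations. Second, your rationality argument is more honest than the paper's: a single relation $aL+bN=1$ with $L,N\in\mathbb{Z}_{\ge 0}$ does \emph{not} by itself force $a,b\in\mathbb{Q}$ (take $a=1/\sqrt{2}$, $b=1-1/\sqrt{2}$, $L=N=1$), so the paper's one-line deduction is, as written, a non sequitur. Your idea of forcing two independent integer relations from the coexistence of distinct operator structures in the limit is the right direction, but note that it is not automatic---in the worked $a=b=1/2$ example the mass-type ($L=0$) contributions cancel---so you cannot simply invoke a surviving $L=0$ term to pin down $b$. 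If you pursue your approach, that is the genuine gap to close; the paper itself does not close it.
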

\begin{proof}
    We will determine how choices of $a$ and $b$ change Eq.~\eqref{eq:(sxcxsycy)^2}. The only terms in Eq.~\eqref{eq:(sxcxsycy)^2} that will contribute to the continuum limit will be those of order $\varepsilon$, which yields a constraint concerning which terms in the sum will be non-zero after the continuum limit is taken, given a choice of $a$ and $b$:
    \begin{equation}\label{eq:ab_constraint}
        a(l_{1x}+l_{1y}+l_{2x}+l_{2y})+b(n_{1x}+n_{1y}+n_{2x}+n_{2y})=1.
    \end{equation}
    Also, since $l_{vm},n_{vm}\in\mathbb{Z}_+$ (where $v=1,2$ and $m=x,y$), $a$ and $b$ must be in $\mathbb{Q}$ for this equation to hold.
\qed\end{proof}
\begin{lemma}
    A DTQW will allow both a continuous spacetime limit (as in Eq.~\eqref{eq:ham2}) and a continuous time limit (as in Eq.~\eqref{eq:real_space_ham}) if and only if 
    \begin{equation}\label{eq:divergenceConstraint}
    \begin{split}
        \sum_{\substack{l_{1x},l_{1y}\\l_{2x},l_{2y}\\\neq (0,0,0,0)}}\sum_{\substack{n_{1x},n_{1y}\\n_{2x},n_{2y}\\\neq (0,0,0,0)}}&\varepsilon^{a(l_{1x}+l_{1y}+l_{2x}+l_{2y})+b(n_{1x}+n_{1y}+n_{2x}+n_{2y})}\\
        \times& \nu_{l_{1x}l_{1y}n_{1x}n_{1y}}\nu_{l_{2x}l_{2y}n_{2x}n_{2y}}\hat{\Gamma}_{l_{1x}l_{1y}n_{1x}n_{1y}}\hat{\Gamma}_{l_{2x}l_{2y}n_{2x}n_{2y}}\\
        \times& \mathcal{H}(1-a(l_{1x}+l_{1y}+l_{2x}+l_{2y})-b(n_{1x}+n_{1y}+n_{2x}+n_{2y}))=0
    \end{split}
    \end{equation}
    where \[ \mathcal{H}(x)=\begin{cases} 
          0 & x\leq 0 \\
          1 & x>0. 
       \end{cases}
    \]
\end{lemma}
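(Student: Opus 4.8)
The plan is to read off the obstruction to the limit in Eq.~\eqref{eq:ham2} directly from the expansion Eq.~\eqref{eq:(sxcxsycy)^2}. By the preceding two lemmas the angle choices $\theta_{0x}=2\pi m$, $\theta_{0y}=2\pi t+\pi$ (with $\delta=\delta_x+\delta_y$ fixed as in Section~\ref{sec:limits}) force the zeroth-order piece $e^{2i(\delta_x+\delta_y)}\nu_{0000}^2\hat{\Gamma}_{0000}^2$ to equal $\mathbb{I}$, cf.\ Eq.~\eqref{eq:ctcs_angle_constraints}, so that $\widehat{W}^2-\mathbb{I}$ is exactly the residual double sum appearing in the last two lines of Eq.~\eqref{eq:(sxcxsycy)^2}. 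Writing $E$ for the total exponent $a(l_{1x}+l_{1y}+l_{2x}+l_{2y})+b(n_{1x}+n_{1y}+n_{2x}+n_{2y})$ attached to a given multi-index, I would regroup this sum by the value of $E$: since $a,b\in\mathbb{Q}$ and (for $a>0$) only finitely many admissible multi-indices share a given $E$, the coefficient $c_E(k_x,k_y)$ of $\varepsilon^{E}$ is a finite sum of bounded matrices and the set of occurring exponents is discrete, so that $(\widehat{W}^2-\mathbb{I})/(2\varepsilon)=\sum_{E}c_E(k_x,k_y)\,\varepsilon^{E-1}$.

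For necessity, if the continuous spacetime limit exists then this series must stay bounded as $\varepsilon\to0$; peeling off the most singular power first and proceeding inductively through the discrete exponent set forces $c_E=0$ for every $E<1$. Since the monomials $\varepsilon^{E}$ are linearly independent as $\varepsilon\to0$, the vanishing of all those coefficients is equivalent to the vanishing, as a function of $\varepsilon$, of the partial sum over exactly the multi-indices with $E<1$; and the weight $\mathcal{H}\big(1-a(l_{1x}+l_{1y}+l_{2x}+l_{2y})-b(n_{1x}+n_{1y}+n_{2x}+n_{2y})\big)$ in Eq.~\eqref{eq:divergenceConstraint} is precisely the indicator of $E<1$, so this is exactly Eq.~\eqref{eq:divergenceConstraint}. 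For sufficiency, assume Eq.~\eqref{eq:divergenceConstraint}: then no term with $E<1$ survives, the terms with $E>1$ vanish in the limit, and the terms with $E=1$ (those selected by Eq.~\eqref{eq:ab_constraint}) assemble into a finite operator, so the limit in Eq.~\eqref{eq:ham2} exists. Together with the continuous time limit, which already exists under these angle constraints by Theorem~\ref{thm:ctdsUnitaries}, both limits exist, which is the claimed equivalence.

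I expect the main obstacle to be the bookkeeping that makes the regrouping rigorous: one must verify that the double series obtained by squaring the entire-function (exponential-series) expansions of $R_y(\theta_m)$ and of $S_m$ converges absolutely and uniformly for $(k_x,k_y)\in[-\pi,\pi]^2$, so that reordering it by powers of $\varepsilon$ and interchanging the sum with the $\varepsilon\to0$ limit are both legitimate, and that for $a>0$ every exponent $E$ is attained by only finitely many multi-indices (the degenerate case $a=0$ being the pure continuous-time situation with $\Delta$ finite, already handled in Section~\ref{sec:limits}). Once this is in place, the order-by-order vanishing, and hence the stated iff, follow immediately.
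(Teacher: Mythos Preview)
Your proposal is correct and follows the same essential idea as the paper: the paper's proof is the single observation that terms of order $\varepsilon^{f}$ with $0<f<1$ diverge in Eq.~\eqref{eq:ham2} and hence must sum to zero, which is exactly what the Heaviside weight in Eq.~\eqref{eq:divergenceConstraint} encodes. Your regrouping by total exponent $E$, the inductive peeling of the most singular power, the explicit sufficiency direction, and the convergence bookkeeping are all more careful than what the paper actually writes down, but they flesh out rather than depart from the paper's argument.
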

\begin{proof}
     Since terms of order $\varepsilon^f$, where $0<f<1$, diverge when $\varepsilon\to 0$ in Eq.~\eqref{eq:ham2}, terms of order $\varepsilon^f$ to sum to zero for the limit to exist. This is summarized in the following constraint:
    \begin{equation}
        \begin{split}
            \sum_{\substack{l_{1x},l_{1y}\\l_{2x},l_{2y}\\\neq (0,0,0,0)}}\sum_{\substack{n_{1x},n_{1y}\\n_{2x},n_{2y}\\\neq (0,0,0,0)}}&\varepsilon^{a(l_{1x}+l_{1y}+l_{2x}+l_{2y})+b(n_{1x}+n_{1y}+n_{2x}+n_{2y})}\\
            \times& \nu_{l_{1x}l_{1y}n_{1x}n_{1y}}\nu_{l_{2x}l_{2y}n_{2x}n_{2y}}\hat{\Gamma}_{l_{1x}l_{1y}n_{1x}n_{1y}}\hat{\Gamma}_{l_{2x}l_{2y}n_{2x}n_{2y}}\\
            \times& \mathcal{H}(1-a(l_{1x}+l_{1y}+l_{2x}+l_{2y})-b(n_{1x}+n_{1y}+n_{2x}+n_{2y}))=0
        \end{split}
    \end{equation}
    where \[ \mathcal{H}(x)=\begin{cases} 
          0 & x\leq 0 \\
          1 & x>0. 
       \end{cases}
    \]
\qed\end{proof}
Upon further analysis of Eq.~\eqref{eq:divergenceConstraint}, we see that $l_{vm}$ produces a spatial derivative with respect to $m$ in the term, so cross terms with multiple derivatives will be terms in the sum with multiple non-zero $l_{vm}$'s. $n_{vm}$ determines the presence of the driving parameter $\theta_{1m}$ in the term. 
Now for our last lemma:
\begin{lemma}
    A DTQW will allow both a continuous spacetime limit (as in Eq.~\eqref{eq:ham2}) and a continuous time limit (as in Eq.~\eqref{eq:real_space_ham}) if and only if $H$ is the following:
    \begin{equation}\label{eq:csct_ham_final}
    \begin{split}
        H\Psi(x,y,t)&=\frac{1}{2}\sum_{\substack{l_{1x},l_{1y}\\l_{2x},l_{2y}\\\neq (0,0,0,0)}}\sum_{\substack{n_{1x},n_{1y}\\n_{2x},n_{2y}\\\neq (0,0,0,0)}}\delta_{a(l_{1x}+l_{1y}+l_{2x}+l_{2y})+b(n_{1x}+n_{1y}+n_{2x}+n_{2y})~1}\\
        &\times \nu'_{l_{1x}l_{1y}n_{1x}n_{1y}}\nu'_{l_{2x}l_{2y}n_{2x}n_{2y}}\hat{\Gamma}_{l_{1x}l_{1y}n_{1x}n_{1y}}\hat{\Gamma}_{l_{2x}l_{2y}n_{2x}n_{2y}}\Psi(x,y,t)
    \end{split}
    \end{equation}
    where
    \begin{equation}
        \nu'_{l_xl_yn_xn_y}=\frac{\partial_x^{l_x}\partial_y^{l_y}(-\frac{i\theta_{1x}}{2})^{n_x}(-\frac{i\theta_{1y}}{2})^{n_y}}{l_x!l_y!n_x!n_y!}
    \end{equation}
\end{lemma}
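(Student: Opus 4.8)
The plan is to start from the fully expanded expression for $(S_xC_xS_yC_y)^2$ in Eq.~\eqref{eq:(sxcxsycy)^2} and feed it directly into the definition of the Hamiltonian in Eq.~\eqref{eq:ham2}. The first move is to invoke the previous three lemmas in sequence: the first lemma fixes $\theta_{0x}$ and $\theta_{0y}$ so that the zeroth-order block $e^{2i(\delta_x+\delta_y)}\nu_{0000}^2\hat\Gamma_{0000}^2$ equals $\mathbb I$, which is exactly the condition that makes the numerator $\widehat W^2-\mathbb I$ start at a strictly positive power of $\varepsilon$; the third lemma guarantees that all terms of fractional order $\varepsilon^f$ with $0<f<1$ cancel among themselves, so that the surviving leading contribution is genuinely of order $\varepsilon^1$. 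After dividing by $2\varepsilon$ and letting $\varepsilon\to 0$, only the multi-indices $(l_{1x},l_{1y},l_{2x},l_{2y},n_{1x},n_{1y},n_{2x},n_{2y})$ satisfying the resonance condition $a(l_{1x}+l_{1y}+l_{2x}+l_{2y})+b(n_{1x}+n_{1y}+n_{2x}+n_{2y})=1$ of the second lemma survive; this is what produces the Kronecker delta $\delta_{a(\cdots)+b(\cdots)~1}$ in Eq.~\eqref{eq:csct_ham_final}.

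Next I would carry out the translation from Fourier space back to position space. The $\nu$ coefficients carry factors $(ik_x)^{l_x}(ik_y)^{l_y}$, and under the inverse Fourier transform convention fixed earlier in the paper each factor of $ik_m$ becomes a spatial derivative $\partial_m$ acting on $\Psi(x,y,t)$; this is precisely the replacement $\nu_{l_xl_yn_xn_y}\mapsto \nu'_{l_xl_yn_xn_y}$ with $(ik_x)^{l_x}(ik_y)^{l_y}\to \partial_x^{l_x}\partial_y^{l_y}$, while the $\theta_{1m}$ factors and the $1/(l_x!\,l_y!\,n_x!\,n_y!)$ normalizations are untouched. The operator parts $\hat\Gamma_{l_{1x}l_{1y}n_{1x}n_{1y}}\hat\Gamma_{l_{2x}l_{2y}n_{2x}n_{2y}}$, built from $R_z$, $\sigma_z$, $\sigma_y$, and $R_y(\theta_{0m})$, are $k$-independent (recall the shift operators were absorbed into $C_m$ only through the $e^{ik_m\Delta\sigma_z}$ prefactor, whose expansion supplies the $l_m$ indices), so they pass through the inverse transform unchanged. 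Finally the overall constant is pinned down: the factor $e^{2i(\delta_x+\delta_y)}$ is forced to $1$ by the constraint $\delta=-p\pi/2$ with $p$ odd inherited from Theorem~\ref{thm:ctdsUnitaries} together with the $\tau=2$ choice, and the $i$ in $i\lim_{\varepsilon\to0}(\widehat W^2-\mathbb I)/(2\varepsilon)$ combined with the leading $\varepsilon^1$ coefficient reproduces the prefactor $\tfrac12$ in Eq.~\eqref{eq:csct_ham_final} once one checks the $i$'s bookkeeping against the $(ik_m)$ and $(-i\theta_{1m}/2)$ monomials.

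For the ``only if'' direction I would argue contrapositively: if $H$ is not of the stated form, then either the zeroth-order term failed to be $\mathbb I$ (violating the first lemma), or some fractional-order term survived (violating the third lemma), or the surviving order-$\varepsilon$ piece did not assemble into Eq.~\eqref{eq:csct_ham_final}; in each case one of the three preceding lemmas is contradicted, and since those lemmas were themselves stated as iff-characterizations of the joint existence of both limits, the equivalence closes. Concretely, one observes that Eq.~\eqref{eq:csct_ham_final} is nothing but the order-$\varepsilon$ coefficient of $(S_xC_xS_yC_y)^2$ read off from Eq.~\eqref{eq:(sxcxsycy)^2}, Fourier-transformed; so "both limits exist" $\iff$ "the $\theta_{0}$, $\delta$, $a$, $b$ constraints hold" $\iff$ "$H$ equals that coefficient", which is the claim.

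\textbf{Main obstacle.} The bookkeeping is the real work: keeping the doubled multi-index structure straight, making sure the $\hat\Gamma\hat\Gamma$ product is not silently simplified using $\theta_{0}$-constraints in a way that loses terms, and — most delicately — verifying that the fractional-order cancellation of the third lemma is not just "assumed" but actually compatible with a nonempty solution set for $(a,b)$, i.e.\ that there really exist rational $(a,b)\in[0,1]\times(0,1]$ for which Eq.~\eqref{eq:divergenceConstraint} holds and Eq.~\eqref{eq:ab_constraint} has solutions; establishing that the map from Fourier monomials $(ik_m)^{l_m}$ to derivatives $\partial_m^{l_m}$ is the only subtlety in passing $\hat\Gamma$ through $\mathcal F^{-1}$, and that no ordering ambiguity arises because the spin operators commute with the scalar differential operators, is the step I would write out most carefully.
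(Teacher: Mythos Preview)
Your proposal is correct and follows essentially the same approach as the paper's (very terse) proof: feed the expansion Eq.~\eqref{eq:(sxcxsycy)^2} into Eq.~\eqref{eq:ham2}, invoke the three preceding lemmas to eliminate the zeroth-order and sub-$\varepsilon$ contributions, and replace $(ik_m)^{l_m}\to\partial_m^{l_m}$ under $\mathcal{F}^{-1}$ to land on Eq.~\eqref{eq:csct_ham_final}. One small correction to your phase bookkeeping: with $\delta=-p\pi/2$ for odd $p$ one has $e^{2i\delta}=-1$, not $1$; it is the full product $e^{2i\delta}\nu_{0000}^2\hat\Gamma_{0000}^2$ that equals $\mathbb{I}$ (since $\hat\Gamma_{0000}^2=-\mathbb{I}$ under the $\theta_{0}$ constraints), so the conclusion is unaffected.
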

\begin{proof}
    We use Eq.~\eqref{eq:ham2} to evaluate the limit. Using that $\mathcal{F}^{-1}(ik_m)=\partial_m$, we have the following (with $\delta_{ij}$ being the Kronecker delta):
    \begin{equation}
        \begin{split}
            H\Psi(x,y,t)&=\frac{1}{2}\sum_{\substack{l_{1x},l_{1y}\\l_{2x},l_{2y}\\\neq (0,0,0,0)}}\sum_{\substack{n_{1x},n_{1y}\\n_{2x},n_{2y}\\\neq (0,0,0,0)}}\delta_{a(l_{1x}+l_{1y}+l_{2x}+l_{2y})+b(n_{1x}+n_{1y}+n_{2x}+n_{2y})~1}\\
            &\times \nu'_{l_{1x}l_{1y}n_{1x}n_{1y}}\nu'_{l_{2x}l_{2y}n_{2x}n_{2y}}\hat{\Gamma}_{l_{1x}l_{1y}n_{1x}n_{1y}}\hat{\Gamma}_{l_{2x}l_{2y}n_{2x}n_{2y}}\Psi(x,y,t)
        \end{split}
    \end{equation}
    where
    \begin{equation}
        \nu'_{l_xl_yn_xn_y}=\frac{\partial_x^{l_x}\partial_y^{l_y}(-\frac{i\theta_{1x}}{2})^{n_x}(-\frac{i\theta_{1y}}{2})^{n_y}}{l_x!l_y!n_x!n_y!}
    \end{equation}
\end{proof}

As in the previous section, we conclude the discussion with the following theorem encompassing our result:
\begin{theorem}\label{thm:2dContSpaceTimeLimit}
    Let the lattice spacing and time steps of the 2D DTQW be parametrized by infinitesimal parameter $\varepsilon$ as in Eq.~\eqref{eq:spacetimeParametrization}. Let $\tau=2$ and let $C_j(\delta_j,\zeta_j,\theta_j,\phi_j)$ be the $2\times2$ unitary matrix in Eq. \eqref{eq:rotations}, with only the angle $\theta_j$ depending on $\varepsilon$ in the following way: $\theta_j=\theta_{0j}+\theta_{1j}\varepsilon^b$ with $\theta_{0j},\theta_{1j}\in\mathbb{R}$, $b\in(0,1]$. A DTQW will allow both a continuous spacetime limit (as defined in Eq.~\eqref{eq:ham2})  and continuous time limit (as defined in Eq.~\eqref{eq:real_space_ham} with $\tau=2$) if and only if the following 4 constraints are met:
    \begin{equation}\label{eq:thm2_Constraints}
        \begin{split}
            &1.~\theta_{0x}=2\pi m \text{ and } \theta_{0y}=2\pi t+\pi\\
            &2.~a(l_{1x}+l_{1y}+l_{2x}+l_{2y})+b(n_{1x}+n_{1y}+n_{2x}+n_{2y})=1.\\
            &3.~a,~b\in\mathbb{Q}\\
            &4.~\sum_{\substack{l_{1x},l_{1y}\\l_{2x},l_{2y}\\\neq(0,0,0,0)}}\sum_{\substack{n_{1x},n_{1y}\\n_{2x},n_{2y}\\\neq(0,0,0,0)}}\varepsilon^{a(l_{1x}+l_{1y}+l_{2x}+l_{2y})+b(n_{1x}+n_{1y}+n_{2x}+n_{2y})}\\
            &~~~\times \nu_{l_{1x}l_{1y}n_{1x}n_{1y}}\nu_{l_{2x}l_{2y}n_{2x}n_{2y}}\hat{\Gamma}_{l_{1x}l_{1y}n_{1x}n_{1y}}\hat{\Gamma}_{l_{2x}l_{2y}n_{2x}n_{2y}}\\
            &~~~\times \mathcal{H}(1-a(l_{1x}+l_{1y}+l_{2x}+l_{2y})-b(n_{1x}+n_{1y}+n_{2x}+n_{2y}))=0
        \end{split}
    \end{equation}
    The Hamiltonian obtained in such a limit is the following:
    \begin{equation}\label{eq:thm2_Ham}
        \begin{split}
            H\Psi(x,y,t)&=\frac{1}{2}\sum_{\substack{l_{1x},l_{1y}\\l_{2x},l_{2y}\\\neq (0,0,0,0)}}\sum_{\substack{n_{1x},n_{1y}\\n_{2x},n_{2y}\\\neq (0,0,0,0)}}\delta_{a(l_{1x}+l_{1y}+l_{2x}+l_{2y})+b(n_{1x}+n_{1y}+n_{2x}+n_{2y})~1}\\
            &\times \nu'_{l_{1x}l_{1y}n_{1x}n_{1y}}\nu'_{l_{2x}l_{2y}n_{2x}n_{2y}}\hat{\Gamma}_{l_{1x}l_{1y}n_{1x}n_{1y}}\hat{\Gamma}_{l_{2x}l_{2y}n_{2x}n_{2y}}\Psi(x,y,t)
        \end{split}
    \end{equation}
    where
    \begin{equation}
        \nu'_{l_xl_yn_xn_y}=\frac{\partial_x^{l_x}\partial_y^{l_y}(-\frac{i\theta_{1x}}{2})^{n_x}(-\frac{i\theta_{1y}}{2})^{n_y}}{l_x!l_y!n_x!n_y!}
    \end{equation}
    and $\hat{\Gamma}$ is as defined in Eq.~\eqref{eq:unitaryCoeffs}
\end{theorem}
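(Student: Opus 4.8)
The plan is to assemble the five preceding lemmas into the stated equivalence, handling the ``only if'' and ``if'' directions together by tracking, order by order in $\varepsilon$, which contributions to the expansion \eqref{eq:(sxcxsycy)^2} of $\widehat{W}^2=(S_xC_xS_yC_y)^2$ must vanish for the quotient in \eqref{eq:ham2} to have a finite limit.

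First I would isolate the $\varepsilon^0$ contribution $e^{2i(\delta_x+\delta_y)}\nu_{0000}^2\hat{\Gamma}_{0000}^2$. Since the right-hand side of \eqref{eq:ham2} divides by $2\varepsilon$, a finite limit forces this term to equal $\mathbb{I}$; diagonalizing $\nu_{0000}\hat{\Gamma}_{0000}$ and equating eigenvalues to a root of unity reproduces the constraint equation \eqref{eq:constraint_function_simple2}, exactly as in Lemma~\ref{lma:constraints}. Intersecting its admissible solutions with those already required for the continuous-time limit (Theorem~\ref{thm:ctdsUnitaries}: $|(\theta_{0x}-\theta_{0y})\bmod 2\pi|=\pi$, $\delta=-p\pi/2$, $\tau$ even) singles out $\theta_{0x}=2\pi m$, $\theta_{0y}=2\pi t+\pi$, which is constraint~1. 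Here I would also recall that $\tau=2$ was fixed at the outset precisely because evenness of $\tau$ is necessary for the continuous-time limit, so no separate condition on $\tau$ appears in the list.

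Next I would analyze the remaining double sum in \eqref{eq:(sxcxsycy)^2}, in which each monomial carries the power $\varepsilon^{a(l_{1x}+l_{1y}+l_{2x}+l_{2y})+b(n_{1x}+n_{1y}+n_{2x}+n_{2y})}$. Monomials whose exponent lies in $(0,1)$ diverge as $\varepsilon\to0$ after division by $2\varepsilon$, so their coefficients must sum to zero; this is constraint~4, with the Heaviside $\mathcal{H}$ selecting exactly those terms. Monomials with exponent strictly greater than $1$ drop out in the limit and impose nothing. Monomials with exponent exactly $1$ survive and, after dividing by $2\varepsilon$ and multiplying by $i$, assemble the Hamiltonian; the Kronecker delta in \eqref{eq:thm2_Ham} records this selection, which is constraint~2. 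Because the multi-indices $l_{vm},n_{vm}$ are non-negative integers, constraint~2 is solvable only when $a,b\in\mathbb{Q}$, giving constraint~3. Transforming back to position space via $\mathcal{F}^{-1}(ik_m)=\partial_m$ converts every $\nu$ into the corresponding $\nu'$ and yields \eqref{eq:thm2_Ham}. For the converse, I would check that constraints 1--4 are jointly sufficient: constraint~1 makes the $\varepsilon^0$ term the identity, constraint~4 removes the sub-linear divergences, the $\varepsilon$-linear terms converge to the finite operator \eqref{eq:thm2_Ham}, and setting $a=0$ (resp.\ $a=1$) recovers the continuous-time (resp.\ continuous-spacetime) regime, so the walk is plastic.

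The main obstacle I anticipate is bookkeeping rather than conceptual: one must confirm that the operator series in \eqref{eq:thm2_Ham} is genuinely well-defined --- that for fixed rational $(a,b)$ the set of multi-index tuples with $a(l_{1x}+l_{1y}+l_{2x}+l_{2y})+b(n_{1x}+n_{1y}+n_{2x}+n_{2y})=1$ contributes a meaningful (possibly high-order differential) operator on the assumed $C^\infty$ states --- and that no cross term forced to vanish by constraint~4 simultaneously contributes at order $\varepsilon$ for some degenerate rational choice of $(a,b)$. Verifying that the intersection with the Section~\ref{sec:limits} constraints is exactly constraint~1 and not a strictly larger set, and treating those borderline $(a,b)$ cleanly, is where the care is needed; everything else reduces to the expansions already recorded.
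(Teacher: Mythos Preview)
Your proposal is correct and follows essentially the same route as the paper: the theorem is obtained by concatenating the four lemmas of Section~\ref{sec:plasticQW}, which isolate in turn the $\varepsilon^0$ piece (yielding constraint~1 by intersection with Theorem~\ref{thm:ctdsUnitaries}), the order-$\varepsilon$ selection (constraints~2 and~3), the sub-linear cancellation (constraint~4), and the inverse Fourier transform producing \eqref{eq:thm2_Ham}. Your caveat about the intersection is apt: the paper in fact notes that \eqref{eq:constraint_function_simple2} admits additional solutions involving $\cos\!\big(\tfrac{a_1\pm a_2}{2}\big)=0$ but explicitly restricts to $\theta_{0x}=2\pi m$, $\theta_{0y}=2\pi t+\pi$ as the choice compatible with the continuous-time analysis, so constraint~1 as stated is a selection rather than the full intersection---your instinct that this step needs care matches the paper's own acknowledgement.
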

The form of the limit in Eq.~\eqref{eq:csct_ham_final} is very powerful, as it identifies the type of PDE obtained for any possible choice of $\theta$ and $\Delta_x=\Delta_y$ scaling dependence of $\varepsilon$.

\subsection{Example with $a=b=1/2$}
As an example, we analyze the $a=b=\frac{1}{2}$ scenario. For this case, the only terms which are not zero by the Kronecker delta in Eq.~\eqref{eq:csct_ham_final} are those with two of the $l_{vm}$s equalling 1 and the $n_{vm}$s equalling 0 (6 terms), one $l_{vm}$ equalling 1 and one $n_{vm}$ equalling 1 (16 terms), two $n_{vm}$s equalling 1 and the $l_{vm}$s equalling 0 (6 terms), and one $l_{vm}$ or $n_{vm}$ equalling 2 with the rest equalling zero (8 terms). In the following section, we will be analyzing each of these terms and applying constraints to them to uncover the PDE in this continuous spacetime limit. The main constraints we will be focusing on are the $\theta_{0x}=2\pi m$ and $\theta_{0y}=2\pi t+\pi$ constraints and the constraints from Eq.~\eqref{eq:divergenceConstraint}.

We begin by writing the non-divergence constraint from Eq.~\eqref{eq:divergenceConstraint}, adapted to our $a=b=1/2$ example:
\begin{equation}
    \begin{split}
        \varepsilon^{\frac{1}{2}}(&G_{10000000}+G_{0100000}+G_{00100000}+G_{00010000}\\
        +&G_{00001000}+G_{00000100}+G_{00000010}+G_{00000001})=0.
    \end{split}
\end{equation}
where
\begin{equation}
    G_{l_{1x}l_{2x}l_{1y}l_{2y}n_{1x}n_{2x}n_{1y}n_{2y}}=\nu_{l_{1x}l_{1y}n_{1x}n_{1y}}\nu_{l_{2x}l_{2y}n_{2x}n_{2y}}\hat{\Gamma}_{l_{1x}l_{1y}n_{1x}n_{1y}}\hat{\Gamma}_{l_{2x}l_{2y}n_{2x}n_{2y}}
\end{equation}
Collecting terms of order $\theta_{1x}$, $\theta_{1y}$, $\partial_x$, and $\partial_y$ and setting each to zero, we obtain the following constraint equations (where $a_1=\phi_{x}+\zeta_{y}$ and $a_2=\phi_{y}+\zeta_{x}$):
\begin{equation}\label{eq:div_constraints_list}
    \begin{split}
        &\text{$\theta_{1x}$: } R_z(a_1)R_y(\theta_{0y})R_z(a_2)+R_z(-a_1)R_y(\theta_{0y})R_z(-a_2)=0\\
        &\text{$\theta_{1y}$: } R_z(a_2)R_y(\theta_{0x})R_z(a_1)+R_z(-a_2)R_y(\theta_{0x})R_z(-a_1)=0\\
        &\text{$\partial_x$: } R_y(\theta_{0x})R_z(a_1)R_y(\theta_{0y})+R_y(-\theta_{0x})R_z(a_1)R_y(-\theta_{0y})=0\\
        &\text{$\partial_y$: } R_y(\theta_{0y})R_z(a_2)R_y(\theta_{0x})+R_y(-\theta_{0y})R_z(a_2)R_y(-\theta_{0x})=0.
    \end{split}
\end{equation}
When plugging in $\theta_{0x}=2\pi m$ and $\theta_{0y}=2\pi t+\pi$ in these equations, we see that first two equations in Eqs.~\eqref{eq:div_constraints_list} are satisfied only if $a_1=\frac{\pi}{2}(\alpha+1)$ and $a_2=\frac{\pi}{2}\beta$ (for integer $\alpha$ and $\beta$). We will be referring to these constraints when analyzing the terms with two of the $n_{vm}$s equalling one and terms with one $n_{vm}$ equalling 2.

Now we analyze the terms themselves. We first analyze the terms with two of the $l_{vm}$s equalling one and terms with one $l_{vm}$ equalling 2. We will see that the one term with $l_{1m}=1$ and $l_{2m}=1$ will cancel with the term with $l_{1m}=2$ and the term with $l_{2m}=2$ when the constraints $\theta_x=2\pi m$, $\theta_y=2\pi t+\pi$ are used. We begin by writing $\nu'\nu'\hat{\Gamma}\hat{\Gamma}$ for terms with $n_{vm}=0$ (we call it $t_{l_{1x}l_{2x}l_{1y}l_{2y}}$):
\begin{equation}\label{eq:t_no_n}
    \begin{split}
        &t_{l_{1x}l_{2x}l_{1y}l_{2y}}=\nu'_{l_{1x}l_{1y}00}\nu'_{l_{2x}l_{2y}00}\hat{\Gamma}_{l_{1x}l_{1y}00}\hat{\Gamma}_{l_{2x}l_{2y}00}\\
        &=\frac{\partial_x^{l_{1x}+l_{2x}}\partial_y^{l_{1y}+l_{2y}}}{l_{1x}!l_{2x}!l_{1y}!l_{2y}!}\\
        &\times R_z(\zeta_{x})\sigma_z^{l_{1x}}R_y(\theta_{0x})R_z(\phi_{x})R_z(\zeta_{y})\sigma_z^{l_{1y}}R_y(\theta_{0y})R_z(\phi_{y})\\
        &\times R_z(\zeta_{x})\sigma_z^{l_{2x}}R_y(\theta_{0x})R_z(\phi_{x})R_z(\zeta_{y})\sigma_z^{l_{2y}}R_y(\theta_{0y})R_z(\phi_{y}).
    \end{split}
\end{equation}
We will analyze terms proportional to $\partial_x^2$, $\partial_y^2$, and $\partial_x\partial_y$. For the $\partial_x^2$ terms, the relevant part of Eq.~\eqref{eq:t_no_n} is the following:
\begin{equation}
    \begin{split}
        t_{l_{1x}l_{2x}00}\propto \sigma_z^{l_{1x}}R_y(\theta_{0x})R_z(\phi_{x})R_z(\zeta_{y})R_y(\theta_{0y})\sigma_z^{l_{2x}}/l_{1x}!l_{2x}!
    \end{split}
\end{equation}
Now we compute the relevant part of the sum $t_{1100}+t_{2000}+t_{0200}$ (that is, the sum of the terms proportional to $\partial_x^2$):
\begin{equation}
    \begin{split}
        t_{1100}+t_{2000}+t_{0200}&\propto \sigma_zR_y(\theta_{0x})R_z(\phi_{x})R_z(\zeta_{y})R_y(\theta_{0y})\sigma_z\\
        &+\sigma_z^2R_y(\theta_{0x})R_z(\phi_{x})R_z(\zeta_{y})R_y(\theta_{0y})/2\\
        &+R_y(\theta_{0x})R_z(\phi_{x})R_z(\zeta_{y})R_y(\theta_{0y})\sigma_z^2/2\\
        &=R_y(-\theta_{0x})R_z(\phi_{x})R_z(\zeta_{y})R_y(-\theta_{0y})\\
        &+R_y(\theta_{0x})R_z(\phi_{x})R_z(\zeta_{y})R_y(\theta_{0y})
    \end{split}
\end{equation}
Applying the constraints $\theta_x=2\pi m$, $\theta_y=2\pi t+\pi$, we obtain the following:
\begin{equation}
    \begin{split}
        &t_{1100}+t_{2000}+t_{0200}\\
        &\propto (-1)^mR_z(\phi_{x})R_z(\zeta_{y})(i\sigma_y)+(-1)^mR_z(\phi_{x})R_z(\zeta_{y})(-i\sigma_y)=0.
    \end{split}
\end{equation}
A cancellation also occurs for terms proportional to $\partial_y^2$ by the same reasoning. Concerning terms proportional to $\partial_x\partial_y$, they will not be present if any constraint with $\theta_x$ or $\theta_y$ being equal to an integer multiple of $\pi$ is used (see Appendix~\ref{app:cross_term_constraints}).

Now we analyze the terms with two of the $n_{vm}$s equalling one and terms with one $n_{vm}$s equalling 2. These terms are interpreted as mass terms in the continuum limit, as they are not proportional to any derivatives. We will see that a constraint in Eq.~\eqref{eq:divergenceConstraint} which enforces no divergences when $\varepsilon\to0$ cancels these terms, even before any constraints from Eq.~\eqref{eq:constraint_function_simple2} are used. As before, we begin by writing $\nu'\nu'\hat{\Gamma}\hat{\Gamma}$ for terms with $l_{vm}=0$ (we call it $\tilde{t}_{n_{1x}n_{2x}n_{1y}n_{2y}}$):
\begin{equation}\label{eq:n_no_t}
    \begin{split}
        &\tilde{t}_{n_{1x}n_{2x}n_{1y}n_{2y}}=\nu'_{00n_{1x}n_{1y}}\nu'_{00n_{2x}n_{2y}}\hat{\Gamma}_{00n_{1x}n_{1y}}\hat{\Gamma}_{00n_{2x}n_{2y}}\\
        &=\frac{(-\frac{i\theta_{1x}}{2})^{n_{1x}+n_{2x}}(-\frac{i\theta_{1y}}{2})^{n_{1y}+n_{2y}}}{n_{1x}!n_{2x}!n_{1y}!n_{2y}!}\\
        &\times R_z(\zeta_{x})\sigma_y^{n_{1x}}R_y(\theta_{0x})R_z(\phi_{x})R_z(\zeta_{y})\sigma_y^{n_{1y}}R_y(\theta_{0y})R_z(\phi_{y})\\
        &\times R_z(\zeta_{x})\sigma_y^{n_{2x}}R_y(\theta_{0x})R_z(\phi_{x})R_z(\zeta_{y})\sigma_y^{n_{2y}}R_y(\theta_{0y})R_z(\phi_{y}).
    \end{split}
\end{equation}
We will analyze terms proportional to $\theta_{1x}^2$, $\theta_{1y}^2$, and $\theta_{1x}\theta_{1y}$. For terms proportional to $\theta_{1x}^2$, the relevant part of Eq.~\eqref{eq:n_no_t} is the following:
\begin{equation}
    \tilde{t}_{n_{1x}n_{2x}00}\propto\sigma_y^{n_{1x}}R_z(\phi_{x})R_z(\zeta_{y})R_y(\theta_{0y})R_z(\phi_{y}) R_z(\zeta_{x})\sigma_y^{n_{2x}}/n_{1x}!n_{2x}!
\end{equation}
Now we compute the relevant part of the sum $\tilde{t}_{1100}+\tilde{t}_{2000}+\tilde{t}_{0200}$ (that is, the sum of the terms proportional to $\theta_{1x}^2$):
\begin{equation}
    \begin{split}
        \tilde{t}_{1100}+\tilde{t}_{2000}+\tilde{t}_{0200}&\propto\sigma_yR_z(\phi_{x})R_z(\zeta_{y})R_y(\theta_{0y})R_z(\phi_{y}) R_z(\zeta_{x})\sigma_y\\
        &+\sigma_y^2R_z(\phi_{x})R_z(\zeta_{y})R_y(\theta_{0y})R_z(\phi_{y}) R_z(\zeta_{x})/2\\
        &+R_z(\phi_{x})R_z(\zeta_{y})R_y(\theta_{0y})R_z(\phi_{y}) R_z(\zeta_{x})\sigma_y^2/2\\
        &=R_z(-\phi_{x})R_z(-\zeta_{y})R_y(\theta_{0y})R_z(-\phi_{y}) R_z(-\zeta_{x})\\
        &+R_z(\phi_{x})R_z(\zeta_{y})R_y(\theta_{0y})R_z(\phi_{y})R_z(\zeta_{x})
    \end{split}
\end{equation}
These terms do not cancel when the constraints $\theta_x=2\pi m$ and $\theta_y=2\pi t+\pi$ are used, rather they cancel when the $\theta_{1y}$ non-divergence constraint from Eq.~\eqref{eq:div_constraints_list} is imposed.

Next, we analyze the terms with one $l_{vm}$ and one $n_{nm}$. First we write $\nu'\nu'\hat{\Gamma}\hat{\Gamma}$ for all $l_{vm}$ and $n_{vm}$ (we denote it as $\hat{t}_{l_{1x}l_{2x}l_{1y}l_{2y}n_{1x}n_{2x}n_{1y}n_{2y}}$):
\begin{equation}
    \begin{split}
        \hat{t}_{l_{1x}l_{2x}l_{1y}l_{2y}n_{1x}n_{2x}n_{1y}n_{2y}}&=\partial_x^{l_{1x}+l_{2x}}\partial_y^{l_{1y}+l_{2y}}(\frac{-i\theta_{1x}}{2})^{n_{1x}+n_{2x}}(\frac{-i\theta_{1y}}{2})^{n_{1y}+n_{2y}}\\
        &\times R_z(\zeta_x)\sigma_z^{l_{1x}}\sigma_y^{n_{1x}}R_y(\theta_{0x})R_z(a_1)\sigma_z^{l_{1y}}\sigma_y^{n_{1y}}R_y(\theta_{0y})R_z(a_2)\\
        &\times \sigma_z^{l_{2x}}\sigma_y^{n_{2x}}R_y(\theta_{0x})R_z(a_1)\sigma_z^{l_{2y}}\sigma_y^{n_{2y}}R_y(\theta_{0y})R_z(\phi_y).
    \end{split}
\end{equation}
Now we reduce the above expression by plugging in the constraints $\theta_x=2\pi m$ and $\theta_y=2\pi t+\pi$:
\begin{equation}
    \begin{split}
        \hat{t}_{l_{1x}l_{2x}l_{1y}l_{2y}n_{1x}n_{2x}n_{1y}n_{2y}}&=-\partial_x^{l_{1x}+l_{2x}}\partial_y^{l_{1y}+l_{2y}}(\frac{-i\theta_{1x}}{2})^{n_{1x}+n_{2x}}(\frac{-i\theta_{1y}}{2})^{n_{1y}+n_{2y}}\\
        &\times R_z(\zeta_x)\sigma_z^{l_{1x}}\sigma_y^{n_{1x}}R_z(a_1)\sigma_z^{l_{1y}}\sigma_y^{n_{1y}+1}R_z(a_2)\sigma_z^{l_{2x}}\sigma_y^{n_{2x}}\\
        &\times R_z(a_1)\sigma_z^{l_{2y}}\sigma_y^{n_{2y}+1}R_z(\phi_y).
    \end{split}
\end{equation}
Thus we collect terms proportional to $\theta_{1x}\partial_x$, $\theta_{1y}\partial_y$, $\theta_{1x}\partial_y$, and $\theta_{1y}\partial_x$. First $\theta_{1x}\partial_x$:
\begin{equation}
    \begin{split}
        \sum_{\substack{l_{1x},l_{2x}\\n_{1x},n_{2x}\\=\{0,1\}}} \hat{t}_{l_{1x}l_{2x}00n_{1x}n_{2x}00}&=\partial_x\frac{i\theta_{1x}}{2}R_z(\zeta_x)\sigma_z^{l_{1x}}\sigma_y^{n_{1x}}R_z(a_1)\sigma_yR_z(a_2)\sigma_z^{l_{2x}}\sigma_y^{n_{2x}}\\
        &\times R_z(a_1)\sigma_yR_z(\phi_y)\\
        &=i\theta_{1x}\partial_x\sigma_zR_z(2(\phi_y+\zeta_x)).
    \end{split}
\end{equation}
Finally the $\theta_{1y}\partial_y$ term:
\begin{equation}
    \begin{split}
        \sum_{\substack{l_{1y},l_{2y}\\n_{1y},n_{2y}\\=\{0,1\}}} \hat{t}_{00l_{1y}l_{2y}00n_{1y}n_{2y}}&=\partial_y\frac{i\theta_{1y}}{2}R_z(\zeta_x)R_z(a_1)\sigma_z^{l_{1y}}\sigma_y^{n_{1y}+1}R_z(a_2)\\
        &\times R_z(a_1)\sigma_z^{l_{2y}}\sigma_y^{n_{2y}+1}R_z(\phi_y)\\
        &=i\theta_{1y}\partial_y\sigma_z\sigma_yR_z(2\phi_y).
    \end{split}
\end{equation}
The $\theta_{1x}\partial_y$ term:
\begin{equation}
    \begin{split}
        \sum_{\substack{l_{1x},l_{2x}\\n_{1y},n_{2y}\\=\{0,1\}}} \hat{t}_{l_{1x}l_{2x}0000n_{1y}n_{2y}}&=\partial_y\frac{i\theta_{1x}}{2}R_z(\zeta_x)\sigma_z^{l_{1x}}R_z(a_1)\sigma_y^{n_{1y}+1}R_z(a_2)\sigma_z^{l_{2x}}\\
        &\times R_z(a_1)\sigma_y^{n_{2y}+1}R_z(\phi_y)\\
        &=i\theta_{1x}\partial_y\sigma_z\sigma_yR_z(-2(\zeta_x+\zeta_y+\phi_x)),
    \end{split}
\end{equation}
and the last $\theta_{1y}\partial_x$ term:
\begin{equation}
    \begin{split}
        \sum_{\substack{l_{1y},l_{2y}\\n_{1x},n_{2x}\\=\{0,1\}}} \hat{t}_{00l_{1y}l_{2y}n_{1x}n_{2x}00}&=\partial_x\frac{i\theta_{1y}}{2}R_z(\zeta_x)\sigma_y^{n_{1x}}R_z(a_1)\sigma_z^{l_{1y}}\sigma_yR_z(a_2)\sigma_y^{n_{2x}}\\
        &\times R_z(a_1)\sigma_z^{l_{2y}}\sigma_yR_z(\phi_y)\\
        &=i\theta_{1y}\partial_x\sigma_z\sigma_yR_z(-2(\zeta_x+\zeta_y+\phi_x)).
    \end{split}
\end{equation}
These are the only non-zero terms in the continuum limit, so the full continuum limit time evolution equation for $a=b=1/2$ is the following:
\begin{equation}\label{eq:continuum_Limit_reduced}
    \begin{split}
        \partial_t\Psi(x,y,t)=(\hat{P_x}\partial_x+\hat{P_y}\partial_y)\Psi(x,y,t)
    \end{split}
\end{equation}
where
\begin{equation}
    \begin{split}
        \hat{P_x}&=i\theta_{1x}\sigma_zR_z(2(\phi_y+\zeta_x))+i\theta_{1y}\sigma_z\sigma_yR_z(-2(\zeta_x+\zeta_y+\phi_x))\\
        \hat{P_y}&=i\theta_{1y}\sigma_z\sigma_yR_z(2\phi_y)+i\theta_{1x}\sigma_z\sigma_yR_z(-2(\zeta_x+\zeta_y+\phi_x))
    \end{split}
\end{equation}
It can also easily be seen that $[\hat{P_x},\hat{P_y}]\neq 0$ for all values of $\zeta_x,~\zeta_y,~\phi_x,~\phi_y,~\theta_{1x},$ and $\theta_1y$.
%
%
%
%
%
\section{Conclusion}\label{sec:conclusion}

We introduced a QW over the 2D+1 spacetime grid, and we parametrized the walk with 9 parameters (4 for each coin, and 1 for the stroboscopic time step). We further allowed the coin parameters to be truncated Taylor polynomial at first order $\varepsilon$, which introduced 8 more free parameters. We showed that some of those parameters (the $\theta_{0i}$, $i=x,y$), must be constrained in a particular way for the continuous time and continuous spacetime limit to exist, and that the stroboscopic time step $\tau$ must be even to have both. We called this large family of QWs plastic.
We then used these constraint equations to derive a lattice Hamiltonian on a 2D-grid in continuous time (i.e. for $\Delta =1$) and a very general transport equation with dispersion terms in $2+1$ spacetime dimensions when both $\Delta_t$ and $\Delta$ tend to zero. In particular we have shown that this last PDE includes the massless Dirac Equation. 
This opens the route for elaborating QW-based quantum simulators of interacting particles admitting both non relativistic ($\Delta_t\ll\Delta$) and relativistic regime ($\Delta_t\simeq\Delta$) in a very elegant way. Moreover, the non-relativistic, naive lattice fermion Hamiltonians are known to suffer the fermion-doubling problem, i.e. a spurious degree of freedom. On the other hand, the DTQW does not suffer this problem. An intriguing question is whether the model hereby presented, suffers this problem or not. We leave this as an open question. Moreover, the methods used in this work lay the groundwork for use in a general $n$D+1 dimensional DTQW continuous time limit. 

\section{Acknowledgements}
The authors acknowledge inspiring conversations with Pablo Arrighi, Tamiro Villazon, and Pieter W. Claeys. This work has been funded by the P{\'e}pini{\`e}re d'Excellence 2018, AMIDEX fondation, project DiTiQuS and the ID \#60609 grant from the John Templeton Foundation, as part of the "The Quantum Information Structure of Spacetime (QISS)" Project.
\newpage
%
%
%
%
%
%
\begin{appendices}
%
%
%
%
%
%
\section{$W$ Expansion}\label{app:Wexpansion}
We wish to expand $\widehat{W}^\tau$ to first order in $\Delta_t$. We begin by expanding $\widehat{S_m}C_m$ up to $O(\Delta_t^2)$, where $m=x,y$:
\begin{equation}
    \begin{split}
        \widehat{S_m}C_m&=e^{i\delta_m} [R_z(\zeta'_{m})R_y(\theta_{0m})R_z(\phi_{m})-\frac{i \Delta_t}{2}(\zeta_{1m} \sigma_z R_z(\zeta'_{m})R_y(\theta_{0m})R_z(\phi_{m})\\
        &+\theta_{1m} \sigma_y R_z(-2\zeta'_{m})R_z(\zeta'_{m})R_y(\theta_{0m})R_z(\phi_{m})\\
        &+\phi_{1m} R_z(\zeta'_{m})R_y(\theta_{0m})R_z(\phi_{m}) \sigma_z)+O(\Delta_t^2)]\\
        &=e^{i\delta_m}(A_m-\frac{i \Delta_t}{2} B_m+O(\Delta_t^2))
    \end{split}
\end{equation}
Now we can combine the product of $\widehat{S_x}C_x$ and $\widehat{S_y}C_y$ up to $O(\Delta_t)$:
\begin{equation}
    \begin{split}
    \widehat{W}=\widehat{S_x}C_x\widehat{S_y}C_y&=e^{i\delta_x}(A_x-\frac{i \Delta_t}{2} B_x+O(\Delta_t^2))e^{i\delta_y}(A_y-\frac{i \Delta_t}{2} B_y+O(\Delta_t^2))\\
    &=e^{i(\delta_x+\delta_y)}(A_xA_y-\frac{i\Delta_t}{2}(A_xB_y+B_xA_y)+O(\Delta_t^2))\\
    &=e^{i\delta}(A-\frac{i\Delta_t}{2}B+O(\Delta_t^2))
    \end{split}
\end{equation}
And now we expand $\widehat{W}^\tau$ in powers of $\Delta_t$:
\begin{equation}
    \begin{split}
        \widehat{W}^\tau=(\widehat{S_x}C_x\widehat{S_y}C_y)^\tau&=e^{i\delta \tau}(A-\frac{i \Delta_t}{2} B+O(\Delta_t^2))^\tau\\
        &=e^{i\delta \tau}(A^\tau-\frac{i \Delta_t}{2}(A^{\tau-1}B+A^{\tau-2}BA+...+ABA^{\tau-2}+BA^{\tau-1})+O(\Delta_t^2))\\
        &=e^{i\delta \tau}(A^\tau-\frac{i \Delta_t}{2}\sum_{j=0}^{\tau-1}A^{\tau-1-j}BA^j+O(\Delta_t^2))\\
        &=(e^{i\delta}A)^\tau(1-\frac{i \Delta_t}{2}A^{-1}\sum_{j=0}^{\tau-1}A^{-j}BA^j+O(\Delta_t^2))
    \end{split}
\end{equation}
%
%
%
%
%
%
\newpage
\section{$\{A,B\}$ Expansion}\label{app:commutatorExpansion}
We begin by expanding $\{A,B\}$ in terms of $A_x$, $A_y$, $B_x$, and $B_y$, using $A=A_xA_y$ and $B=A_xB_y+B_xA_y$:
\begin{equation}
    \begin{split}
            \{A,B\}&=A_xA_yA_xB_y+A_xB_yA_xA_y+A_xA_yB_xA_y+B_xA_yA_xA_y.
    \end{split}
\end{equation}
Now we expand further using $B_m=\zeta_{1m}\sigma_z A_m+\theta_{1m} \sigma_y R_z(-2\zeta'_{m})A_m+\phi_{1m} A_m \sigma_z$ (where $m=x$ or $y$):
\begin{equation}\label{eq:anti-commutator}
    \begin{split}
        \{A,B\}=&\zeta_{1y}A_xA_yA_x\sigma_zA_y+\theta_{1y}A_xA_yA_x\sigma_yR_z(-2\zeta_{0y}')A_y+\phi_{1y}A_xA_yA_xA_y\sigma_z\\
        +&\zeta_{1y}A_x\sigma_zA_yA_xA_y+\theta_{1y}A_x\sigma_yR_z(-2\zeta_{0y}')A_yA_xA_y+\phi_{1y}A_xA_y\sigma_zA_xA_y\\
        +&\zeta_{1x}A_xA_y\sigma_zA_xA_y+\theta_{1x}A_xA_y\sigma_yR_z(-2\zeta_{0x}')A_xA_y+\phi_{1x}A_xA_yA_x\sigma_zA_y\\
        +&\zeta_{1x}\sigma_zA_xA_yA_xA_y+\theta_{1x}\sigma_yR_z(-2\zeta_{0x}')A_xA_yA_xA_y+\phi_{1x}A_x\sigma_zA_yA_xA_y.
    \end{split}
\end{equation}
Using $\sigma_zA_x=(-1)^\nu A_x\sigma_z$ and $\sigma_zA_y=(-1)^{\nu+1} A_y\sigma_z$, it can be shown that the first and third columns of Eq.~\eqref{eq:anti-commutator} cancel. Now we expand the remaining terms using $A_m=R_z(\zeta'_{m})R_y(\theta_{0m})R_z(\phi_{m})$:
\begin{equation}
    \begin{split}
        &\theta_{1y}A_xA_yA_x\sigma_yR_z(-2\zeta'_{0y})A_y=-\theta_{1y}R_z(-2\phi_{0y})\sigma_y\\
        &\theta_{1y}A_x\sigma_yR_z(-2\zeta'_{0y})A_yA_xA_y=-\theta_{1y}R_z(2\zeta'_{0x}+2\phi_{0x}(-1)^\nu+2\zeta'_{0y}(-1)^\nu)\sigma_y\\
        &\theta_{1x}A_xA_y\sigma_yR_z(-2\zeta'_{0x})A_xA_y=-\theta_{1x}R_z(2\zeta'_{0y}(-1)^\nu-2\phi_{0y}+2\phi_{0x}(-1)^\nu)\sigma_y\\
        &\theta_{1x}\sigma_yR_z(-2\zeta'_{0x})A_xA_yA_xA_y=-\theta_{1x}R_z(2\zeta'_{0x})\sigma_y.
    \end{split}
\end{equation}
Putting it all together, we have the following for $\{A,B\}$:
\begin{equation}
    \begin{split}
        \{A,B\}=&-\theta_{1y}(R_z(-2\phi_{0y})+R_z(2\zeta'_{0x}+2\phi_{0x}(-1)^\nu+2\zeta'_{0y}(-1)^\nu))\sigma_y\\
        &-\theta_{1x}(R_z(2\zeta'_{0x})+R_z(2\zeta'_{0y}(-1)^\nu-2\phi_{0y}+2\phi_{0x}(-1)^\nu))\sigma_y
    \end{split}
\end{equation}
%
%
%
%
%
%
\newpage
\section{Cross Term Constraint}\label{app:cross_term_constraints}
In this section, we will deduce for which constraints from section~\ref{sec:plasticQW} do the resulting continuum limit PDEs include a cross derivative term. We will find that the only constraints which will have cross terms will be the pair $\cos(\frac{\phi_{x}+\phi_{y}+\zeta_{x}+\zeta_{y}}{2})=\cos(\frac{a_1+a_2}{2})=0\rightarrow a_1+a_2=2\pi m +\pi$ and $\cos(\frac{a_1-a_2}{2})=0\rightarrow a_1-a_2=2\pi t +\pi$. We first reiterate the definitions of $\hat{\Gamma}_{l_xl_yn_xn_y}$ and $\nu'_{l_xl_yn_xn_y}$:
\begin{equation}
    \begin{split}
        &\hat{\Gamma}_{l_xl_yn_xn_y}=R_z(\zeta_{x})\sigma_z^{l_x}\sigma_y^{n_x}R_y(\theta_{0x})R_z(\phi_{x})R_z(\zeta_{y})\sigma_z^{l_y}\sigma_y^{n_y}R_y(\theta_{0y})R_z(\phi_{y})\\
        &\nu'_{l_xl_yn_xn_y}=\frac{(\partial_x)^{l_x}(\partial_y)^{l_y}(-\frac{i\theta_{1x}}{2})^{n_x}(-\frac{i\theta_{1y}}{2})^{n_y}}{l_x!l_y!n_x!n_y!}
    \end{split}
\end{equation}
Since cross terms have all $n_{vm}$s equal to zero and two $l_{vm}$s equal to one, we write the proportionality expression for $\nu'_{l_{1x}l_{1y}00}\nu'_{l_{2x}l_{2y}00}\hat{\Gamma}_{l_{1x}l_{1y}00}\hat{\Gamma}_{l_{2x}l_{2y}00}$ (where $a_1=\phi_{x}+\zeta_{y}$ and $a_2=\phi_{y}+\zeta_{x}$):
\begin{equation}
    \begin{split}
        \nu'_{l_{1x}l_{1y}00}\nu'_{l_{2x}l_{2y}00}\hat{\Gamma}_{l_{1x}l_{1y}00}\hat{\Gamma}_{l_{2x}l_{2y}00}&\propto\partial_{x}^{l_{1x}+l_{2x}}\partial_{y}^{l_{1y}+l_{2y}}\\
        &\times R_z(\zeta_{x})\sigma_z^{l_{1x}}R_y(\theta_{0x})R_z(\phi_{x})R_z(\zeta_{y})\sigma_z^{l_{2x}}R_y(\theta_{0y})R_z(\phi_{y})\\
        &\times R_z(\zeta_{x})\sigma_z^{l_{1y}}R_y(\theta_{0x})R_z(\phi_{x})R_z(\zeta_{y})\sigma_z^{l_{2y}}R_y(\theta_{0y})R_z(\phi_{y})\\
        &=\partial_{1x}^{l_{1x}+l_{2x}}\partial_{1y}^{l_{1y}+l_{2y}}\\
        &\times R_z(\zeta_{x})\sigma_z^{l_{1x}}R_y(\theta_{0x})R_z(a_1)\sigma_z^{l_{2x}}R_y(\theta_{0y})R_z(a_2)\\
        &\times \sigma_z^{l_{1y}}R_y(\theta_{0x})R_z(a_1)\sigma_z^{l_{2y}}R_y(\theta_{0y})R_z(\phi_{y})\\
        &=\partial_{1x}^{l_{1x}+l_{2x}}\partial_{1y}^{l_{1y}+l_{2y}}\sigma_z^{l_{1x}+l_{1y}+l_{2x}+l_{1y}} R_z(\zeta_{x})\\
        &\times R_y((-1)^{l_{1y}+l_{2x}+l_{1y}}\theta_{0x})R_z(a_1)R_y((-1)^{l_{2x}+l_{1y}}\theta_{0y})\\
        &\times R_z(a_2)R_y((-1)^{l_{2y}}\theta_{0x})R_z(a_1)R_y(\theta_{0y})R_z(\phi_{y})
    \end{split}
\end{equation}
Since each cross term will be proportional to $\partial_x\partial_y$, we define the following matrix to contain the relevant parts of the above equation to our analysis:
\begin{equation}
    \begin{split}
        \hat{J}_{l_{1x}l_{1y}l_{2x}l_{2y}}&=R_y((-1)^{l_{1y}+l_{2x}+l_{2y}}\theta_{0x})R_z(a_1)R_y((-1)^{l_{2x}+l_{2y}}\theta_{0y})R_z(a_2)R_y((-1)^{l_{2y}}\theta_{0x})
    \end{split}
\end{equation}
For the cross derivative terms in Eq.~\eqref{eq:csct_ham_final} to cancel, the following must be true:
\begin{equation}\label{eq:cancellation}
    \begin{split}
            &\hat{J}_{1100}+\hat{J}_{1001}+\hat{J}_{0110}+\hat{J}_{0011}=0\\
            &\rightarrow R_y(-\theta_{0x})R_z(a_1)R_y(\theta_{0y})R_z(a_2)R_y(\theta_{0x})\\
            &+R_y(-\theta_{0x})R_z(a_1)R_y(-\theta_{0y})R_z(a_2)R_y(-\theta_{0x})\\
            &+R_y(\theta_{0x})R_z(a_1)R_y(-\theta_{0y})R_z(a_2)R_y(\theta_{0x})\\
            &+R_y(\theta_{0x})R_z(a_1)R_y(\theta_{0y})R_z(a_2)R_y(-\theta_{0x})=0
    \end{split}
\end{equation}
We see that either the first two terms can cancel when the $\partial_y$  non-divergent constraint in Eq.~\eqref{eq:div_constraints_list} is imposed, or the second and fourth term can cancel when the $\partial_x$ non-divergent constraint in Eq.~\eqref{eq:div_constraints_list} is imposed. When the $\partial_y$ constraint is imposed again, the above equation reduces to the following:
\begin{equation}
    R_y(-2\theta_y)-R_y(2\theta_y)=0\to \theta_y=n\pi \text{~for n=1, 2, 3, ...}.
\end{equation}
Similarly, when the $\partial_x$ constraint is imposed again, we recover the following:
\begin{equation}
    R_y(-2\theta_x)-R_y(2\theta_x)=0\to \theta_x=m\pi \text{~for m=1, 2, 3, ...}.
\end{equation}
Thus, we see that these cross derivative terms will cancel with either $\theta_x$ or $\theta_y$ equal to an integer multiple of $\pi$. Therefore, most of the constraints will contain no cross terms. The only set of constraints which will have cross terms will be the pair $\cos(\frac{\phi_{x}+\phi_{y}+\zeta_{x}+\zeta_{y}}{2})=\cos(\frac{a_1+a_2}{2})=0\rightarrow a_1+a_2=2\pi m +\pi$ and $\cos(\frac{a_1-a_2}{2})=0\rightarrow a_1-a_2=2\pi t +\pi$, as there is no constraints on $\theta_x$ or $\theta_y$ equalling an integer multiple of $\pi$.
\end{appendices}


%
%

\bibliographystyle{spphys}       
\bibliography{biblio}   

%
%

\end{document}